\definecolor{darkgreen}{HTML}{004b23}
\definecolor{green}{HTML}{38b000}
\definecolor{lightgreen}{HTML}{aacc00}
\colorlet{red}{red}
\colorlet{navy}{black!40!blue}
\definecolor{blue}{HTML}{1e90ff}
\definecolor{lightblue}{HTML}{87CEEB}
\colorlet{orange}{orange}
\colorlet{yellow}{red!20!yellow}
\definecolor{lavendar}{HTML}{ba55d3}
\newcommand{\revcircle}[3]{\raisebox{0.5pt}{\tikz{\node[draw=#3,thick,scale=#1,circle,fill=#2](){};}}}
\newcommand*{\isotri}[1]{%
  \scalebox{1.118034}[1]{$#1$}%
}
\newtheorem{theorem}{Theorem}[section]%
\newtheorem{lemma}[theorem]{Lemma}%
\newtheorem{corollary}[theorem]{Corollary}%
\newtheorem{definition}[theorem]{Definition}
\newtheorem{proposition}[theorem]{Proposition}
\newcommand{\AutoAdjust}[3]{\mathchoice{ \left #1 #2  \right #3}{#1 #2 #3}{#1 #2 #3}{#1 #2 #3} }
\newcommand{\Xcomment}[1]{{}}
\newcommand\diff{\mathop{}\!\mathrm{d}}
\newcommand{\InBrackets}[1]{\AutoAdjust{[}{#1}{]}}
\newcommand{\Ex}[2][]{\operatorname{\mathbf E}_{#1}\InBrackets{#2}}
\newcommand{\Prx}[2][]{\operatorname{\mathbf{Pr}}_{#1}\InBrackets{#2}}
\def\prob{\Prx}
\def\expect{\Ex}
\newcommand{\noaccents}[1]{#1}
\newcommand{\newagentvar}[3][\noaccents]{%
\expandafter\newcommand\expandafter{\csname #2\endcsname}{#1{#3}}%
\expandafter\newcommand\expandafter{\csname #2s\endcsname}{#1{\boldsymbol{#3}}}%
\expandafter\newcommand\expandafter{\csname #2smi\endcsname}[1][i]{#1{\boldsymbol{#3}}_{-##1}}%
\expandafter\newcommand\expandafter{\csname #2i\endcsname}[1][i]{#1{#3}_{##1}}%
\expandafter\newcommand\expandafter{\csname #2k\endcsname}[1][i]{#1{#3}_{##1}}%
\expandafter\newcommand\expandafter{\csname #2ith\endcsname}[1]{#1{#3}^{(##1)}}%
}
\DeclareMathOperator{\REV}{\mathcal{R}}
\DeclareMathOperator{\rev}{rev}
\DeclareMathOperator{\myer}{myer}
\DeclareMathOperator{\UTIL}{\mathcal{U}}
\DeclareMathOperator{\supp}{supp}
\DeclareMathOperator*{\argmax}{arg\,max}
\newcommand{\task}{k}
\newcommand{\retention}{r}
\newcommand{\margValue}{v_m}
\newcommand{\retDist}{F_r}
\newcommand{\retDens}{f_{r}}
\newcommand{\gDiscount}{\beta}
\newcommand{\discountDense}{f_{\gamma}}
\newcommand{\taylor}[1]{\textcolor{blue}{[Taylor says\@ifnotempty{#1}{: #1}]}}
\newcommand{\narun}[1]{\textcolor{green}{[Narun says\@ifnotempty{#1}{: #1}]}}
\newcommand{\hufu}[1]{\textcolor{lavendar}{[Hu says\@ifnotempty{#1}{: #1}]}}
\newcommand{\klb}[1]{\textcolor{red}{[Kevin says\@ifnotempty{#1}{: #1}]}}
\newcommand\numberthis{\addtocounter{equation}{1}\tag{\theequation}}
\title{Pay to (Not) Play: Monetizing Impatience in Mobile Games}
\author{Taylor Lundy}
\email{tlundy@cs.ubc.ca}
\affiliation{%
\institution{University of British Columbia}
\city{Vancouver}
\country{Canada}
}
\author{Narun Raman}
\email{narunram@cs.ubc.ca}
\affiliation{%
\institution{University of British Columbia}
\city{Vancouver}
\country{Canada}
}
\author{Hu Fu}
\email{fuhu@mail.shufe.edu.cn}
\affiliation{%
\institution{Shanghai University of Finance and Economics}
\city{Shanghai}
\country{China}}
\affiliation{%
\institution{Key Laboratory of Interdisciplinary Research of Computation and Economics} 
\department{Ministry of Education}
\city{Shanghai}
\country{China}
}
\author{Kevin Leyton-Brown}
\email{kevinlb@cs.ubc.ca}
\affiliation{%
\institution{University of British Columbia}
\city{Vancouver}
\country{Canada}
}
\begin{abstract}
    Mobile gaming is a rapidly growing and incredibly profitable sector; having grown seven-fold over the past 10 years, it now grosses over \$100 billion annually.
    This growth was due in large part to a shift in monetization strategies: rather than charging players an upfront cost (``pay-to-play''), games often request optional microtransactions throughout gameplay (``free-to-play''). 
    We focus on a common scenario in which games include wait times---gating either items or game progression---that players can pay to skip. 
    Game designers typically say that they optimize for player happiness rather than revenue; however, prices for skips are typically set at levels that few players are willing to pay, leading to low purchase rates. 
    Under a traditional analysis, it would seem that game designers fail at their stated goal if few players buy what they are selling. 
    We argue that an alternate model can better explain this dynamic: players value tasks more highly as they are perceived to be more difficult. 
    While skips can increase players' utilities by providing instant gratification, pricing skips too cheaply can \emph{lower} players' utilities by decreasing the perceived amount of work needed to complete a task.
    We show that high revenue, high player utility, and low purchase rates can all coexist under this model, particularly under a realistic distribution of players having few buyers but a few big-spending ``whales.''
    We also investigate how a game designer should optimize prices under our model.
\end{abstract}
\begin{document}

\maketitle

\section{Introduction}

The gaming industry is larger than Hollywood and the music industry combined; currently valued at \$245.10 billion USD, it is expected to reach \$376.08 billion by 2028 \cite{mordor2023}. 
While console and PC games have historically dominated the gaming category, mobile games have achieved seven-fold revenue growth since 2012 \cite{newzoo2021} and accounted for a majority of gaming industry revenues in 2023 \cite{mordor2023}. 
Reflecting this shifting marketplace, Microsoft recently announced plans to acquire Activision Blizzard for \$68.7 billion USD \cite{microsoft_2022}.
Just two months after release, \emph{Diablo Immortal}, Blizzard's franchise mobile title, surpassed \$100 million USD; overall, mobile games account for 40\% (\$1.08 billion USD) of Activision's revenue \cite{partis_2022, activision_2021}.
 
A key factor in the rise of mobile games is a shift in pricing models. 
Over 90\% of mobile games are free to download (``free-to-play'') and monetized by offering optional ``microtransactional'' payments throughout game-play \cite{misiuk_2019, seufert_2017, Tafradzhiyski_2023}.
While free-to-play is not a novel strategy (e.g., many platforms like Spotify offer free tiers), mobile games monetize a qualitatively different set of behaviours.
For example, mobile games often allow users to spend money on virtual cosmetic items, extra lives that enable longer play sessions, or items that make the game easier.
We are interested in understanding what player behavior causes these strategies and, in this paper, we focus on skips.

Games that monetize using skips have a gameplay loop requiring ``grinding'': completing tedious tasks, such as waiting for a timer to tick down. 
Players are offered the ability to skip the grind for a price.
This style of monetization is so successful that an entire genre of games consisting solely of sequential timers and skips generated over \$3 billion worldwide in 2021 \cite{gdcriseofidle2013, moloco2021}.

While the success of skip-based monetization may be surprising, it is odder still how skips are priced.
It is common to see talks at game design conferences discussing the perils of short-sighted monetization strategies \cite{Greer_2013, Nussbaum_Telfer_2015, krasilnikov_2020, Reilley_2023}, or the importance of ensuring that players feel that purchases were ``worth it'' \cite{levy_2021, engblom_2017, Reilley_2023}.
Successful game designers understand this well; when talking about fixing stagnating revenues in Clash of Clans, the chief game lead at the time, Eino Joas \citeyearpar{joas_2020} said, ``retention always trumps monetization''. 
Despite this, virtual goods such as skips are typically priced so high in practice that very few people are willing to buy them.
On average, only 3.5\% of players spend any money in-game and the majority of converted players, or `minnows,' spend only \$1 to \$5 a month and contribute to less than 15\% of the revenue \cite{brightman_2016, liftoff_2020}.
The majority of revenue comes from `whales,' who spend more than \$25 per month on average and account for less than 15\% of buyers \cite{shi2015minnow}.
Under traditional economic models, high average utilities and prices few buyers can afford are incompatible. 
It might, therefore, seem either that designers are failing to make most players happy or that they truly care more about revenue than player utility.

We show that existing game designs can indeed yield high player utility---along with high revenue---under an alternative model of player value.
In this model, each individual player values skips because they dislike waiting. 
However, skips can decrease \emph{all} players' values for virtual goods the more they are used.
We believe that players value signaling that they achieved a high rank by grinding for it and cheap skips degrade this value by making progress purchasable.
Modeling how the purchasing of skips degrades value requires modeling both how knowledgeable players are about other players' purchasing habits and how they incorporate this knowledge into their valuations. 
We make these modeling decisions based on the design dimensions of the game: the more immersive or social a game is the more sensitive players' values for playing are to different signals of grinding. 
We use this model to explain existing pricing strategies, but also show that, under some simplifying assumptions, game designers can leverage tools from traditional mechanism design to optimize prices.

We now describe what follows in the remainder of this paper.
\Cref{sec:relatedwork} considers related work studying utility models grounded in difficulty, cost, or scarcity; offering evidence about ways in which mobile game players vary; and studying the monetization of mobile games.
\Cref{sec:model} formally introduces our model of player utility where players have value functions that increase with the rarity of skip usage.
We model three types of mobile game players, differing in their responsiveness to other players' behaviors: fully-sensitive, price-sensitive, and insensitive.
With our model in hand we offer two main results.
First, in \Cref{sec:util}, we demonstrate that, when players' value functions are price-sensitive, our model can explain the counter-intuitive pricing observed in practice, giving conditions under which high prices yield both good utility and good revenue. 
Furthermore, we show that these conditions are satisfied under player distributions with few buyers but a sufficient number of ``whales.''
We illustrate these theoretical results in several examples and show that they also hold empirically in the more complicated fully-sensitive setting.
Second, in \Cref{sec:rev}, we prove that when players' value functions are insensitive, a simple pricing scheme approximates optimal revenue. 
We also show that this same pricing scheme approximates optimal revenue even when value functions are price-sensitive.
We then test the simple pricing scheme with simulations, showing that in a variety of different settings, this scheme is competitive with other, natural pricing schemes. 
Finally, we conclude our work in \Cref{sec:future}.

\section{Related Work}\label{sec:relatedwork}

A long line of literature---both in game theory and beyond---explores the idea that an object's value to an agent can be impacted by the difficulty or cost of obtaining it. 
For example, this idea is implicit in the idea of conspicuous consumption \cite{veblen1955, giffen1991}, which posits that agents value goods in part because they can be used to signal wealth.
In such models, higher prices and scarcity are primary drivers of value. 
Conspicuous consumption in mobile gaming is a well-documented phenomenon \cite{martin2008consuming, lehdonvirta2009virtual, cai2019purchases}.
Directly related to our work, \citet{geng2019conspicuous} investigated the freemium business model, finding that conspicuous consumption is a main driver of premium content pricing.
In other cases, players can be motivated to signal their skill or commitment to a game rather than their wealth \cite{carter2016cheating}.
Further work has shown how social factors can exacerbate conspicuous consumption \cite{cass2004status}.
Realizing this, mobile games often incorporate social networks to reinforce the conspicuous value of the virtual goods they offer \cite{fields2011social, choi2004people, grieve2013face}. 
\citet{goetz2022peer} use data from online games to show that being matched with other players who have acquired virtual items makes a given player more likely to purchase them.

Of course, a player's perception of value also depends on intrinsic factors, such as their degree of impatience.
In this vein, \citet{hanner2015purchasing} showed how player purchasing behaviour can be predicted based on past purchases. 
Additionally, these intrinsic factors can lead to very different player behaviour. 
Game designers often report that in real-world player populations, the majority of revenue comes from a select group (around 2-3\%) of ``whale'' players \cite{gdcgamedata, shi2015minnow, liftoff_2020}.

A key theme of this paper is that optimal pricing in mobile games must carefully trade off the value players assign to rewards that are difficult to obtain and the impatience that they experience as a result.
The relationship between these two factors has been studied in various other contexts. 
\citet{hsiao2015intent,evans2016economics,keogh2018waiting} study how these factors affect the motivations behind purchases in mobile games. 
Recent work by \citet{sepulveda2019exploring} looks at how optimizing game difficulty can impact player satisfaction.

Player happiness is a first-order consideration in the monetization of mobile games: games that rely on microtransactions need players to keep coming back to play more and pay more. 
The link between retention and revenue has been well studied in both academia and the game industry, with many approaches for increasing retention being empirically validated and widely deployed \cite{nojima2007MMO,gdclongterm2013}.
For example, \citet{xue2017dynamic,huang2019level} study how to adjust the difficulty of a game or opponents to keep players from leaving. 

Payments for skipping past obstacles yield considerable revenue in mobile games \cite{gdcriseofidle2013}.
This style of difficulty-reducing monetization has started to gain attention in the optimization literature. 
\citet{sheng2020incentivized} study the revenue and retention implications of offering items that make the game easier, focusing on games that provide these items in exchange for watching an ad. 
Further afield, \citet{jiao2021selling} study the impact of selling skill-boosting items in multiplayer, competitive games. 

Other recent papers study additional challenges related to game monetization that overlap less with our own work. 
Examples include optimizing pricing for loot boxes, which are lotteries over virtual items \cite{chen2019loot}, and characterizing tradeoffs between perpetual and subscription-based licensing \cite{dierks2020revenue}.

\section{Our Model}\label{sec:model}
The most straightforward rationale for players valuing game playing is that the game mechanic itself is enjoyable: it is fun to blast meteors out of the sky; to explore an open world; to solve puzzles; etc. 
Many mobile games have mechanics that defy such an explanation. 
Instead, in order to complete a level or unlock an item, players must simply wait for a timer to tick down.
It is hard to believe that there is anything fun about this kind of game mechanic (e.g., most people dislike waiting for a bus or a checkout clerk). 
Furthermore, players would not value the opportunity to skip timers if waiting were what made the game enjoyable. 

Our model rests on a sharply different assumption: that players value achieving tasks because of the perceived difficulty of completing them. 
This means that as achievements become scarcer in the player population, their value increases.
Farm\-Ville crops that take days to grow have high value because they are hard to obtain; a Town Hall Level 12 in Clash of Clans is desirable because obtaining one requires $2$ years and $2$ days of waiting through timers \cite{pixel_crux}. 
This is similar to existing models of conspicuous consumption except that the cost is denominated in time rather than money.

While there is value in difficult gameplay, there is also frustration. 
We assume that a player's demand for skips comes from a time discount in the reward for completing the task, which we model as their type $\type \in [0,1]$. We allow for players to differ in their impatience: they draw their types privately and i.i.d.\@ from some prior continuous distribution $\typeDist$. 
Furthermore, value in difficult gameplay can be shaped by the way other players in the player base accomplish their own tasks: players trying to signal their commitment to grinding care about how they compare to others.
Skip prices can therefore affect the conspicuous value of a task, as lower skip prices decrease the fraction of players who accomplished a task by waiting. 
There is typically no way to tell if another player completed a task by waiting or by skipping.
Thus, the perceived value of a task falls with the cost of skipping it, even if the number of players who accomplish it is held constant.

How players' values for signalling grinding are impacted by skip prices depends heavily on how  games are designed. 
We divide mobile games into three broad categories---mid-core, casual, and hypercasual---and argue that each gives rise to a different level of sensitivity to other players' behavior.
First, mid-core games (e.g., Clash of Clans, Game of War) incorporate in-game communities as a way to foster conspicuous value and immersion. 
For instance, a crucial part of game-playing in Clash of Clans is joining `clans' with other players.
Players can see the progress of other members in their clan and easily deduce if a clan member's rank was gained by purchasing or waiting.
We say that players in these games are \emph{fully-sensitive} and assume they know the full type distribution.
We assume that fully-sensitive players determine their value based on both the price of a skip as well as the type distribution of other players.

\begin{definition}\label{def:projval_full}
A fully-sensitive player with type $\type$ drawn from type distribution $\typeDist$, at the beginning of a task has projected value $\type\valueFunc(\typeDist, \skippay)$ for finishing the task without skipping, when the skip price is $\skippay$. 
\end{definition}

We insist that these value functions maintain some nice properties with respect to price.
First, the limit case of zero-cost skips is nonsensical (they would always be chosen), so we assume that $\valueFunc(\typeDist, 0) = 0$.
We also assume that value is monotonically increasing and concave in~$\skippay$. 
These two conditions mean that as the price increases, the probability of sale for skips weakly decreases and the value of the accomplishment weakly increases.
To avoid the uninteresting cases where players obtain non-positive utility at all skip prices we assume that the $\valueFunc'(\typeDist, 0)$ is greater than 1.
Finally, we assume that there exists a point $\pnosale > 0$ where $\valueFunc(\typeDist, \pnosale) = \pnosale$ at which point the value function becomes a constant $\pnosale$.
Otherwise, the designer can give players arbitrarily high utility by setting prices arbitrarily high. 
Figure \ref{fig:valuefunc} illustrates two value functions that satisfy the properties above, with the shaded region indicating values for $\valueFunc(\typeDist, \skippay)$ that yield negative utility at price~$\skippay$.

\begin{figure}[t]
    \centering
    \includegraphics[trim={0 0cm 0 1cm},clip, width=0.4\textwidth]{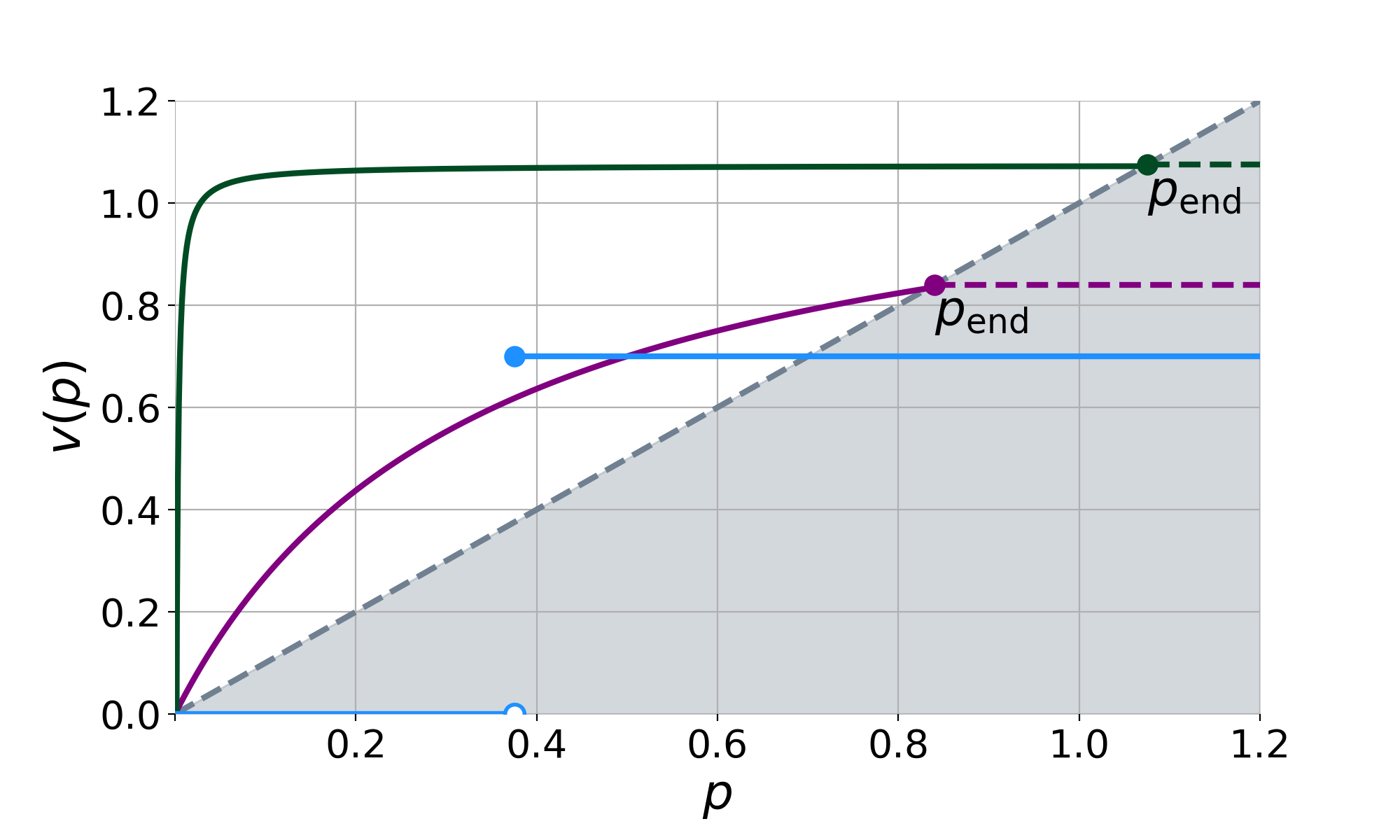}
  \caption{The purple, green, and blue lines are examples of value functions satisfying all of our assumptions for fully-sensitive, price-sensitive, and insensitive players respectively.} \label{fig:valuefunc}
\end{figure}

Casual games (e.g., AdVenture Capitalist), on the other hand, are generally simple, easy to learn, and easy to play. 
The play style of casual games reflects their simplicity: most play for a short time while in a line at a supermarket or waiting for a friend \cite{Turkowski_2023}. 
Designers of casual games generally do not incorporate in-game social communities and often obscure information about other players' purchasing habits such as whether they paid to accomplish tasks. 
Still, players can be very invested in casual games, spending large amounts time and money to progress \cite{gdcriseofidle2013, liftoff_2020}.
We assume these players do not know the type distribution induced by the given game but instead have a prior over what the type distribution might be. 
In this case, the true type distribution cannot affect player value, so a \emph{price-sensitive} player's value function depends only on price.

\begin{definition}\label{def:projval}
A price-sensitive player with type $\type$ at the beginning of a task has projected value $\type\valueFunc(\skippay)$ for finishing the task without skipping, when the skip price is $\skippay$.
\end{definition}

\noindent We assume the same properties about the relationship between value and price hold as just described.

Lastly, we look at players of hypercasual games (e.g., Cow Clicker, Cookie Clicker, Clicker Heroes).
Such players play many hypercasual games and are not invested in any particular game \cite{Turkowski_2023}.
We therefore consider it unlikely that players of hypercasual games are knowledgeable about the distribution of players for any given game.
Furthermore, hypercasual players predominantly play to relieve stress, pass time between daily
activities, or while multitasking, making it unlikely that they are reasoning about how price impacts their enjoyment \cite{facebook2019hyper}.
While skip prices may still need to be set high enough that the game does not feel pointless, beyond this threshold players value functions are \emph{insensitive} to skip prices.

\begin{definition}\label{def:projval_unsoph}
An insensitive player with type $\type$ at the beginning of a task has projected value $\valueFunc_c(\skippay) = \type\valueFunc$ when the skip price $\skippay > c$ for some constant $c$, and $0$ otherwise, for finishing the task without skipping. 
We assume $c$ is the same for all players. 
\end{definition}

For ease of exposition, the rest of this section will define player utilities with respect to price-sensitive values. 
The corresponding definitions for fully-sensitive and insensitive values are very similar and deferred to \Cref{app:util}.
\begin{definition}\label{def:margval}
At the beginning of a task, the marginal value a player with type $\type$ has for a skip at price $\skippay$ is
$$
    \valueFunc_m(\skippay) = (1-\type)\valueFunc(\skippay).
$$
The utility a player receives if they purchase a skip at price $\skippay$ is
$$
    \utilityFunc(\valueFunc(\skippay), \skippay) = \valueFunc(\skippay) - \skippay.
$$
We assume players are utility maximizers. 
Therefore, the utility a player receives in a task is
$$
    \utilityFunc_{\max}(\valueFunc(\skippay),\skippay,\type) = \max(\type\valueFunc(\skippay), \valueFunc(\skippay) - \skippay).
$$
\end{definition}

Our technical results in this paper leverage a common distributional condition from the economics literature called a \emph{monotone hazard rate} (MHR). 
\begin{definition}
A distribution $\dist(x)$ satisfies the MHR condition if the inverse hazard rate, $\frac{1-\dist(x)}{\dense(x)}$, is monotone non-increasing.
\end{definition}

\section{Optimal Pricing for Sensitive Players} \label{sec:util} 
We begin with studying price-sensitive players and show that they naturally give rise to the phenomenon of utility-optimal and revenue-optimal pricing coinciding at high prices. In particular, we show this when the game exhibits a \emph{whale distribution:} (1) a large majority of the population never buys anything; and (2) a small fraction of players generates a large majority of overall revenue.
Then we empirically demonstrate distributions where the results with price-sensitive players carry over to fully-sensitive players and others where their pricing behavior is different.
Note that while we will be characterizing the distributions that give rise to high prices, we only allow the game designer to control skip prices.
For the entirety of this section we focus on the setting with a single unrepeated task.

\subsection{Price-Sensitive Players}

The technical work of this section unfolds in three parts.
First, we show a condition on the type distribution under which it is utility optimal to set skip prices arbitrarily high and that it is satisfied by distributions with feature (1). 
Second, we give a characterization of how the value function can drive the revenue-optimal price higher and demonstrate the impact of distributions with feature (2). 
Finally, we leverage these two conditions to construct distributions where the revenue optimal payment still yields high utility.
In this section, for analytic convenience, we only consider value functions that are differentiable everywhere. 

We first need a description of the total utility generated by skips in a task.
\begin{proposition} \label{prop:util_0}
The expected utility (of all players) from a single task is
$$
    \expect[\type]{\utilityFunc_{\max}(\cdot)} = \typeDist\left(1-\frac{\skippay}{\valueFunc(\skippay)}\right)(\valueFunc(\skippay) - \skippay) + \left(1-\typeDist\left(1-\frac{\skippay}{\valueFunc(\skippay)}\right)\right)\expect[\type]{\type|\type >  1-\frac{\skippay}{\valueFunc(\skippay)}}\valueFunc(\skippay).
$$
We denote the utility-optimal price as
$$
    \skippay_{\mathrm{util}} = \argmax_{\skippay}\expect[\type]{\utilityFunc_{\max}(\cdot)},
$$
and the resulting utility it achieves as $\UTIL_{\max}.$
\end{proposition}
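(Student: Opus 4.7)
The plan is to compute $\Ex[\type]{\utilityFunc_{\max}(\cdot)}$ by identifying the unique threshold type at which a player is indifferent between skipping and waiting, then splitting the expectation over the two resulting subintervals of $[0,1]$ via the law of total expectation.

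First, I would observe from \Cref{def:margval} that $\utilityFunc_{\max}(\valueFunc(\skippay),\skippay,\type) = \max\bigl(\type\valueFunc(\skippay),\,\valueFunc(\skippay)-\skippay\bigr)$ is the pointwise maximum of a term strictly increasing in $\type$ and a term independent of $\type$. Equating the two yields the threshold $\type^{\star} := 1 - \skippay/\valueFunc(\skippay)$: every player with $\type \leq \type^{\star}$ strictly prefers to skip (and receives utility $\valueFunc(\skippay)-\skippay$), while every player with $\type > \type^{\star}$ prefers to wait (and receives utility $\type\valueFunc(\skippay)$). The standing assumptions on $\valueFunc$—namely $\valueFunc(\typeDist,0)=0$, monotone increasing and concave in $\skippay$, $\valueFunc'(\typeDist,0) > 1$, and the fixed point $\valueFunc(\typeDist,\pnosale)=\pnosale$—ensure that $\valueFunc(\skippay) > 0$ whenever $\skippay > 0$ so the ratio is well defined, and in fact $\type^{\star} \in [0,1]$ on the effective price range $\skippay \in (0,\pnosale]$.

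Next, I would apply the tower rule across this threshold. The event $\{\type \leq \type^{\star}\}$ has probability $\typeDist(\type^{\star})$, on which every player obtains the $\type$-independent utility $\valueFunc(\skippay)-\skippay$; on the complement, utility equals $\type\valueFunc(\skippay)$, and the factor $\valueFunc(\skippay)$ pulls out of the conditional expectation. Concretely,
\[
\Ex[\type]{\utilityFunc_{\max}(\cdot)} = \typeDist(\type^{\star})\bigl(\valueFunc(\skippay)-\skippay\bigr) + \bigl(1-\typeDist(\type^{\star})\bigr)\,\Ex[\type]{\type \mid \type > \type^{\star}}\,\valueFunc(\skippay),
\]
which, after substituting $\type^{\star} = 1 - \skippay/\valueFunc(\skippay)$, is exactly the claimed expression. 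The definitions of $\skippay_{\mathrm{util}}$ as the maximizer and $\UTIL_{\max}$ as the induced maximum value follow immediately.

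Because the argument is fundamentally a two-way case split combined with the law of total expectation, I do not anticipate any real obstacle. The only care required is bookkeeping around the boundary cases $\skippay = 0$ (where no skip is ever sold because $\valueFunc(\typeDist,0)=0$) and $\skippay \geq \pnosale$ (where $\valueFunc$ plateaus and $\type^{\star} \leq 0$, so the first summand vanishes and the formula collapses to $\Ex[\type]{\type}\pnosale$); both are handled cleanly by the value-function assumptions, so no additional argument is needed.
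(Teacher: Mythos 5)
Your proposal is correct and follows essentially the same argument as the paper: identify the indifference threshold $1-\skippay/\valueFunc(\skippay)$ (the paper gets it by rearranging the purchase condition $(1-\type)\valueFunc(\skippay)\geq\skippay$, which is equivalent to your comparison of the two branches of the max), then split the expectation into buyers and non-buyers. The extra boundary-case bookkeeping you mention is harmless but not needed.
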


\begin{proof}
The first term accounts the utility contribution for players that purchase a skip.
The utility of a buyer is $\valueFunc(\skippay) - \skippay$ and the probability of sale is the probability their marginal value is greater than the price: $\prob[\type]{(1-\type)\valueFunc(\skippay) \geq p}$ and rearranging for $\type$ gives us $\prob[\type]{\type \leq 1-\frac{\skippay}{\valueFunc(\skippay)}} = \typeDist\left(1-\frac{\skippay}{\valueFunc(\skippay)}\right)$.
The second term is utility contribution from non-buyers.
That is, the expected discounted value that non-buyers receive, $\expect[\type]{\type|\type >  1-\nicefrac{\skippay}{\valueFunc(\skippay)}}\valueFunc(\skippay)$, multiplied by the probability they do not purchase a skip, $1-\typeDist\left(1-\frac{\skippay}{\valueFunc(\skippay)}\right)$.
\end{proof}
We begin by giving a condition implying that the utility-maximizing price is arbitrarily close to $\pnosale$. 
The following condition is of interest on its own, as it characterizes some cases when offering skip microtransactions becomes detrimental to expected player happiness.

\begin{theorem} \label{thm:util_nosale}
If
$$
    \typeDist\left(\frac{\valueFunc'(0) - 1}{\valueFunc'(0)}\right) \leq \frac{\valueFunc'(\pnosale)}{1-\valueFunc'(\pnosale)}\expect[\type]{\type},
$$
then $\forall \epsilon>0$, $\expect[\type]{\utilityFunc_{\max}(\valueFunc(\pnosale),\pnosale ,\type)} \geq \UTIL_{max}-\epsilon$. 
\end{theorem}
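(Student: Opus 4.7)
The goal is to show directly that $U(p) \leq U(\pnosale)$ for every $p \in [0, \pnosale]$, where $U(p) := \expect[\type]{\utilityFunc_{\max}(\valueFunc(p), p, \type)}$. Since $\valueFunc$ is constant on $[\pnosale, \infty)$ and no type strictly prefers paying for a skip priced at $\pnosale = \valueFunc(\pnosale)$, $U$ is itself constant on $[\pnosale, \infty)$, so this bound would give $U(\pnosale) = \UTIL_{max}$ and the theorem holds with $\epsilon = 0$ (the $\forall\,\epsilon$ framing is slack).

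The first step is to put $U(p)$ into a clean form. Writing $\hat{\type}(p) := 1 - p/\valueFunc(p)$ for the skip-indifference threshold, the integrand equals $\valueFunc(p)\max(\type, \hat{\type}(p))$, and integration by parts yields
\[
U(p) \;=\; \valueFunc(p)\left[\expect[\type]{\type} + \int_0^{\hat{\type}(p)} \typeDist(\type)\,d\type\right].
\]
Since $\hat{\type}(\pnosale) = 0$, this gives $U(\pnosale) = \pnosale\,\expect[\type]{\type}$, and therefore
\[
U(p) - U(\pnosale) \;=\; (\valueFunc(p) - \pnosale)\,\expect[\type]{\type} \;+\; \valueFunc(p)\int_0^{\hat{\type}(p)} \typeDist(\type)\,d\type.
\]
The first term is non-positive on $[0, \pnosale]$, so it suffices to dominate the positive second term by the magnitude of the first.

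To do this I would chain three bounds. First, monotonicity of $\typeDist$ together with the identity $\valueFunc(p)\hat{\type}(p) = \valueFunc(p) - p$ gives $\valueFunc(p)\int_0^{\hat{\type}(p)}\typeDist(\type)\,d\type \leq (\valueFunc(p) - p)\typeDist(\hat{\type}(p))$. Second, concavity with $\valueFunc(0) = 0$ forces $\valueFunc(p) \leq p\,\valueFunc'(0)$, so $\hat{\type}(p) \leq (\valueFunc'(0)-1)/\valueFunc'(0)$, and combining monotonicity of $\typeDist$ with the theorem's hypothesis yields $\typeDist(\hat{\type}(p)) \leq \frac{\valueFunc'(\pnosale)}{1-\valueFunc'(\pnosale)}\,\expect[\type]{\type}$. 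Third, concavity at the right endpoint says the chord from $(p,\valueFunc(p))$ to $(\pnosale,\pnosale)$ has slope at least $\valueFunc'(\pnosale)$, i.e., $\pnosale - \valueFunc(p) \geq \valueFunc'(\pnosale)(\pnosale - p)$, which algebra rewrites as $\frac{\valueFunc'(\pnosale)}{1-\valueFunc'(\pnosale)} \leq \frac{\pnosale - \valueFunc(p)}{\valueFunc(p) - p}$. Composing these three inequalities gives precisely $\valueFunc(p)\int_0^{\hat{\type}(p)} \typeDist(\type)\,d\type \leq (\pnosale - \valueFunc(p))\,\expect[\type]{\type}$, as needed.

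The main subtlety is that the hypothesis couples quantities evaluated at the two extreme prices $p = 0$ (inside $\typeDist$) and $p = \pnosale$ (in the derivative ratio), while the conclusion must hold at every intermediate $p$. The chain works because each of the three looseness budgets---crude integration, monotone extension of the $\typeDist$ bound, and the chord-slope comparison---moves in the favorable direction as $p$ ranges over $(0, \pnosale)$, so controlling matters at the endpoints automatically certifies the inequality throughout. The boundary cases $p \in \{0, \pnosale\}$, where $\valueFunc(p) = p$ makes the ratio in the third bound degenerate, are verified directly: $U(0) = 0 \leq U(\pnosale)$, and at $p = \pnosale$ the inequality is an equality.
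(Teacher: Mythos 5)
Your proof is correct and follows essentially the same route as the paper's: compare expected utility at an arbitrary price $p<\pnosale$ with that at $\pnosale$, reduce to the intermediate condition $\typeDist(\hat\type(p))\leq\frac{\slopeX}{1-\slopeX}\expect[\type]{\type}$ with $\slopeX$ the secant slope to $(\pnosale,\pnosale)$ (your third bound is exactly this secant in disguise), and then invoke concavity to replace secant slopes by $\valueFunc'(0)$ and $\valueFunc'(\pnosale)$. Your bookkeeping via the identity $U(p)=\valueFunc(p)\bigl[\expect[\type]{\type}+\int_0^{\hat\type(p)}\typeDist(z)\,\diff z\bigr]$ is cleaner than the paper's secant-slope algebra and even yields the statement with $\epsilon=0$, which implies the theorem as stated.
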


\begin{proof}
Define $\skippay^*$ to be $\argmax_{p_{0}\in [0,\pnosale-\epsilon)} \expect[\type]{\utilityFunc_{\max}(\valueFunc(\skippay),\skippay,\type)}$. 
Let $\startslope^*$ be the slope of the secant line between $\valueFunc(0)$ and $\valueFunc(\skippay^*)$ we can now write the value at $\skippay^*$ as $\valueFunc(\skippay^*) = \startslope^*\skippay^*$. 
From \Cref{prop:util_0}, the utility obtained at a point $\skippay^*<\pnosale$ is:
\begin{align*}
    \startslope^*\skippay^*\left(1-\typeDist\left(\frac{\startslope^*\skippay^* - \skippay^*}{\startslope^*\skippay^*}\right)\right)\expect[\type]{\type|\type \geq \frac{\startslope^*\skippay^*-\skippay^*}{\startslope^*\skippay^*}} + \skippay^*(\startslope^*-1)\typeDist\left(\frac{\startslope^*\skippay^*-\skippay^*}{\startslope^*\skippay^*}\right).
\end{align*}
Similarly, we define $\slopeX^*$ as the slope of secant line between the point $\valueFunc(\skippay^*)$ and $\valueFunc(\pnosale)$. We can write $\valueFunc(\pnosale) = \slopeX^*(\pnosale-\skippay^*)+\slope_1^*\skippay^*$.
Utilizing the fact that $\valueFunc(\skippay_x) = \skippay_x$ this simplifies to $\valueFunc(\pnosale)=\frac{\skippay^*(\startslope^*-\slopeX^*)}{1-\slopeX^*}$.
This gives utility of,
\begin{align*}
    \frac{\skippay^*(\startslope^*-\slopeX^*)}{1-\slopeX^*}\expect[\type]{\type}.
\end{align*}
Now, we can write the inequality for our condition:

\begin{align*}
    \skippay^*(\startslope^*-1)\typeDist\left(\frac{\startslope^*-1}{\startslope^*}\right)+\startslope^*\skippay^*\left(1-\typeDist\left(\frac{\startslope^*-1}{\startslope^*}\right) \right)\expect[\type]{\type|\type \geq \frac{\startslope^*-1}{\startslope^*}} &\leq \frac{\skippay^*(\startslope^*-\slopeX^*)}{1-\slopeX^*}\expect[\type]{\type}\\
    \typeDist\left(\frac{\startslope^*-1}{\startslope^*}\right)+ \frac{\startslope^*-1}{\startslope^*}\int_{\frac{\startslope^*-1}{\startslope^*}}^1 z\discountDense(z)\diff z  &\leq \frac{\startslope^*-\slopeX^*}{(1-\slopeX^*)(\startslope^*-1)}\expect[\type]{\type}.
\end{align*}
We now move the conditional expectation (i.e., the first term on the LHS) over to simplify the RHS.
\begin{align*}
    \typeDist\left(\frac{\startslope^*-1}{\startslope^*}\right) &\leq \frac{\startslope^*-\slopeX^*}{(1-\slopeX^*)(\startslope^* - 1)}\int_{0}^1z\discountDense(z)\diff z - \frac{\startslope^*}{(\startslope^* - 1)}\int_{\frac{\startslope^*-1}{\startslope^*}}^1z\discountDense(z)\diff z\\
    &= \frac{\startslope^*-\slopeX^*}{(1-\slopeX^*)(\startslope^*-1)}\int_{0}^{\frac{\startslope^*-1}{\startslope^*}}z\discountDense(z)\diff z 
    - \frac{\startslope^*(1-\slopeX^*)-(\startslope^*-\slopeX^*)}{(1-\slopeX^*)(\startslope^*-1)}\int_{\frac{\startslope^*-1}{\startslope^*}}^1z\discountDense(z)\diff z\\\\
    &= \frac{\startslope^*-\slopeX^*}{(1-\slopeX^*)(\startslope^*-1)}\int_{0}^{\frac{\startslope^*-1}{\startslope^*}}z\discountDense(z)\diff z +\frac{\startslope^*\slopeX^*-\slopeX^*}{(1-\slopeX^*)(\startslope^*-1)}\int_{\frac{\startslope^*-1}{\startslope^*}}^1z\discountDense(z)\diff z\\
    &\geq \frac{\startslope^*\slopeX^*-\slopeX^*}{(1-\slopeX^*)(\startslope^*-1)}\int_{0}^{\frac{\startslope^*-1}{\startslope^*}}z\discountDense(z)\diff z+\frac{\startslope^*\slopeX^*-\slopeX^*}{(1-\slopeX^*)(\startslope^*-1)}\int_{\frac{\startslope^*-1}{\startslope^*}}^1z\discountDense(z)\diff z
\end{align*}
The last step follows from the fact that $\slopeX^* < 1$.
Thus, we have:
\begin{align*}
    \typeDist\left(\frac{\startslope^* - 1}{\startslope^*}\right) &\leq  \frac{\slopeX^*}{1-\slopeX^*}\int_{\frac{\startslope^*-1}{\startslope^*}}^1z\discountDense(z)\diff z + \frac{\slopeX^*}{1-\slopeX^*}\int_{0}^{\frac{\startslope^*-1}{\startslope^*}}z\discountDense(z)\diff z \\ 
    \typeDist\left(\frac{\startslope^* - 1}{\startslope^*}\right) &\leq \frac{\slopeX^*}{1-\slopeX^*}\expect[\type]{\type}
\end{align*}
Finally, by concavity we must have both $\valueFunc'(0) \geq \startslope^*$ and $\valueFunc'(\pnosale) \leq \slopeX^*$ therefore it is sufficient to have,
\begin{align*}
    \typeDist\left(\frac{\valueFunc'(0) - 1}{\valueFunc'(0)}\right) &\leq \frac{\valueFunc'(\pnosale)}{1-\valueFunc'(\pnosale)}\expect[\type]{\type}
\end{align*}
If this condition holds the utility optimal price can be at most $\epsilon$ away from $\pnosale$. 
Furthermore, the slope of the value function (and therefore the slope of the utility function) above the utility optimal point can be at most $1$. 
This is because utility is strictly increasing while $\valueFunc'(\skippay)>1$. 
Therefore, $\expect[\type]{\utilityFunc_{\max}(\valueFunc(\pnosale),\pnosale, \type)} \geq \UTIL_{max}-\epsilon$
\end{proof}

The proof for this theorem follows from rearranging the definition of optimal utility and can be found in the appendix.
It is useful to interpret this condition through the lens of feature (1) of whale distributions.
The left hand side of the condition represents the highest possible probability of sale across all non-zero prices. 
Meanwhile, the right hand side contains the expectation of $\type$ which grows as the population becomes more patient.
Whale distributions, given a fixed value function, make this condition easier to satisfy; as the number of non-paying players increases, the probability of sale on the left hand side decreases and the expected value on the right hand side increases. 
In Figure \ref{fig:util_opt_impatient}, we can see that moving weight towards the patient portion of the population can shift the utility-optimal price up towards $\pnosale$. 
The intuition behind this result is quite simple: when there are enough patient players in the population, more total utility is gained by raising the price to make them happy than by providing cheap skips as an easy out for impatient players.

We now have a condition on the type distribution that would imply that the utility-optimal game design sets a price with zero probability of sale. 
The next step is to characterize distributions for which the revenue-optimal price monetizes only a small fraction of players.
First, we describe the revenue generated by skips in a single task with respect to the distribution over impatience.
It is useful to keep in mind that a player with high $\type$ is a patient player, therefore the distribution over $1-\type$ is a distribution over ``impatience.'' 

\begin{figure}[h]
     \centering
     \begin{subfigure}[b]{0.45\textwidth}
         \centering
         \includegraphics[trim={0.45cm 0 19cm 0},clip,width=0.8\textwidth]{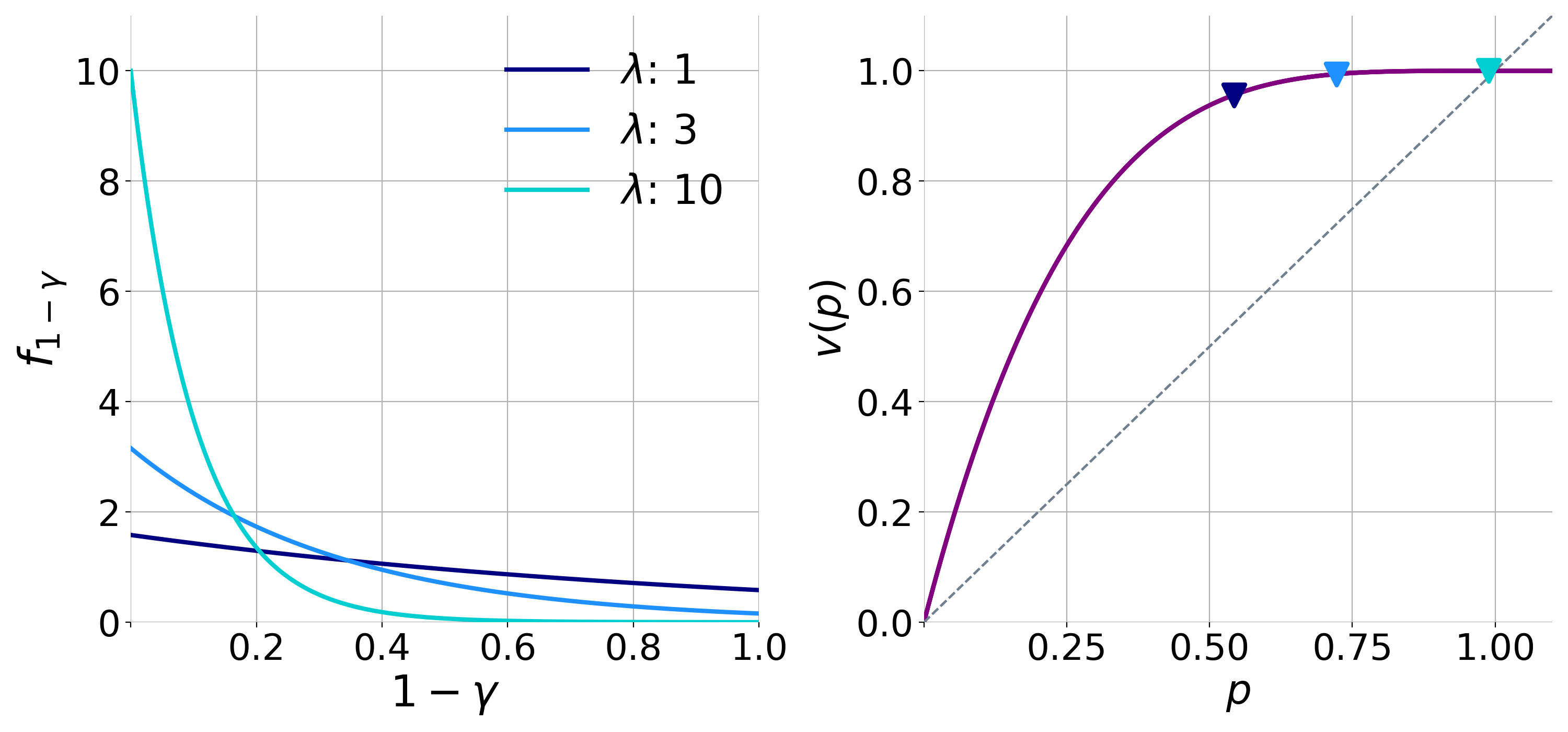}
         \caption{Probability density functions of an exponential distribution, with rate parameter $\lambda \in \{1, 3, 10\}$. As $\lambda$ increases so does the relative density of patient people.}
         \label{subfig:util_exp}
     \end{subfigure}\hfill%
     \begin{subfigure}[b]{0.45\textwidth}
         \centering
         \includegraphics[trim={19.5cm 0.5cm 0.55cm 0},clip,width=0.8\textwidth]{plots/patient_people_util_price.png}
         \caption{Utility optimal prices ({\color{navy} \isotri\blacktriangledown}, {\color{blue} \isotri\blacktriangledown}, {\color{lightblue} \isotri\blacktriangledown}) seen increasing with the relative density of patient players, plotted on $\valueFunc(\skippay) = 1-(\skippay-1)^4$.
         }
         \label{subfig:util_val}
     \end{subfigure}
     \caption{Utility Optimal Price as the Proportion of Patient People Increases}
     \label{fig:util_opt_impatient}
\end{figure}

\begin{proposition}\label{prop:rev_single}
    The revenue generated from offering a skip price $\skippay$ from a single task is
    $$
        \REV(\valueFunc(\skippay), \skippay, \type) = \skippay\left(1-\dist_{1-\type}\left(\frac{\skippay}{\valueFunc(\skippay)}\right)\right).
    $$
\end{proposition}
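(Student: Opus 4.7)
The plan is to obtain the revenue formula by the standard ``price times probability of sale'' decomposition, using the purchase rule implied by $\utilityFunc_{\max}$ from \Cref{def:margval}. First, I would identify the set of types that strictly prefer to purchase a skip. A player of type $\type$ buys when the utility from buying weakly exceeds the utility from waiting, i.e.\ $\valueFunc(\skippay) - \skippay \geq \type \valueFunc(\skippay)$. Rearranging yields the equivalent condition $(1-\type)\valueFunc(\skippay) \geq \skippay$, or $1-\type \geq \skippay / \valueFunc(\skippay)$, which is simply the condition ``marginal value $\valmarg(\skippay) \geq \skippay$,'' already isolated when computing expected utility in \Cref{prop:util_0}.

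Second, I would note that since every buyer pays the posted price $\skippay$ and every non-buyer pays nothing, the expected per-player revenue is exactly
$$
\REV(\valueFunc(\skippay),\skippay,\type) = \skippay \cdot \Prx[\type]{1-\type \geq \tfrac{\skippay}{\valueFunc(\skippay)}}.
$$
Third, I would rewrite this probability in terms of the distribution of $1-\type$, whose CDF is $\dist_{1-\type}$. Since $\typeDist$ is continuous (as assumed in \Cref{sec:model}), so is $\dist_{1-\type}$, and the weak/strict inequality distinction is immaterial; thus
$$
\Prx[\type]{1-\type \geq \tfrac{\skippay}{\valueFunc(\skippay)}} = 1 - \dist_{1-\type}\!\left(\tfrac{\skippay}{\valueFunc(\skippay)}\right),
$$
which combined with the previous display gives exactly the claimed expression.

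There is no real obstacle here: the statement is a direct unpacking of the utility-maximizing purchase rule from \Cref{def:margval} together with the definition of revenue. The only subtlety worth flagging is the change of variable from $\type$ to $1-\type$, which is why the CDF appearing in the formula is $\dist_{1-\type}$ rather than $\typeDist$; since $\typeDist$ is continuous on $[0,1]$, this change of variable is valid and the resulting CDF is well-defined on $[0,1]$.
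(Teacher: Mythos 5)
Your proposal is correct and matches the paper's proof essentially step for step: both derive the purchase condition $(1-\type)\valueFunc(\skippay)\geq\skippay$, multiply the price by the probability of sale, and rewrite that probability as $1-\dist_{1-\type}\bigl(\nicefrac{\skippay}{\valueFunc(\skippay)}\bigr)$. Your added remark about continuity of $\typeDist$ making the weak/strict inequality distinction immaterial is a harmless extra detail the paper leaves implicit.
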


\begin{proof}
    For some price $\skippay$ the probability of sale is $\prob[\type]{(1-\type)\valueFunc(\skippay)\geq \skippay} = \prob[\type]{(1-\type)\geq \frac{\skippay}{\valueFunc(\skippay)}}$. 
    Thus, the expected revenue is $\skippay\left(1-\dist_{1-\type}\left(\frac{\skippay}{\valueFunc\left(\skippay\right)}\right)\right)$.
\end{proof}

\begin{lemma} \label{lma:con-rev}
The revenue-optimal price $\skippay_{\rev}\geq \skippay^*$ if for all $\skippay \leq \skippay^*$, 
$$
    \frac{1-\dist_{1-\type}\left(\frac{\skippay}{\valueFunc(\skippay)}\right)}{\dense_{1-\type}\left(\frac{\skippay}{\valueFunc(\skippay)}\right)} \geq \frac{\skippay}{\valueFunc(\skippay)}\left(1 - \frac{\skippay\valueFunc'(\skippay)}{\valueFunc(\skippay)}\right).
$$
Furthermore, if $\dist_{1-\type}$ satisfies MHR then the optimal price is the point where this is satisfied with equality.
\end{lemma}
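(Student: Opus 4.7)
The plan is to compute $\frac{d\REV}{d\skippay}$ and observe that the stated inequality is exactly the condition making this derivative non-negative, so the first claim follows from the fundamental theorem of calculus. Writing $R(\skippay) := \skippay\bigl(1-\dist_{1-\type}(g(\skippay))\bigr)$ for the revenue from Proposition~\ref{prop:rev_single}, with $g(\skippay) := \skippay/\valueFunc(\skippay)$, the quotient rule yields
\[
g'(\skippay) \;=\; \frac{\valueFunc(\skippay) - \skippay\,\valueFunc'(\skippay)}{\valueFunc(\skippay)^2} \;=\; \frac{1}{\valueFunc(\skippay)}\left(1 - \frac{\skippay\,\valueFunc'(\skippay)}{\valueFunc(\skippay)}\right),
\]
and the product/chain rules give $R'(\skippay) = \bigl(1-\dist_{1-\type}(g(\skippay))\bigr) - \skippay\,\dense_{1-\type}(g(\skippay))\,g'(\skippay)$. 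Dividing by $\dense_{1-\type}(g(\skippay))>0$ and rearranging, $R'(\skippay)\geq 0$ is \emph{equivalent} to the lemma's inequality. Hence if the inequality holds on $[0,\skippay^*]$, then $R$ is non-decreasing there, and $\skippay_{\rev}\ge \skippay^*$.

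For the MHR part, I would argue that any equality point is the revenue optimum. Since $\valueFunc$ is concave with $\valueFunc(0)=0$, the ratio $g(\skippay)=\skippay/\valueFunc(\skippay)$ is non-decreasing in $\skippay$; combined with MHR of $\dist_{1-\type}$, the LHS of the inequality---the inverse hazard rate evaluated at $g(\skippay)$---is non-increasing in $\skippay$. At the first equality point $\skippay_0$ we have $R'(\skippay_0)=0$ by the computation above. To conclude that $R$ is non-increasing for $\skippay>\skippay_0$, I would reparameterize by $y=g(\skippay)$ and analyze $\tilde R(y)=g^{-1}(y)\bigl(1-\dist_{1-\type}(y)\bigr)$; MHR implies that $1-\dist_{1-\type}$ is log-concave, and together with monotonicity of $g^{-1}$ this can be used to show $\tilde R$ is unimodal in $y$, so the first critical point is the global optimum.

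The main obstacle lies in this last step: controlling the RHS of the inequality, $\skippay\,g'(\skippay)=\frac{\skippay}{\valueFunc(\skippay)}\bigl(1-\frac{\skippay\valueFunc'(\skippay)}{\valueFunc(\skippay)}\bigr)$, as $\skippay$ grows past $\skippay_0$. The factor $g(\skippay)$ is non-decreasing, but the factor $1-\skippay\valueFunc'(\skippay)/\valueFunc(\skippay)$ is one minus the elasticity of $\valueFunc$, which need \emph{not} be monotone for a general concave $\valueFunc$ vanishing at $0$ (e.g., it can rebound for piecewise linear $\valueFunc$). The reparameterization sketched above is the cleanest workaround, because log-concavity of the survival function under MHR is robust to the shape of $g^{-1}$; alternatively, one could impose a mild regularity condition on $\valueFunc$ ensuring $\skippay\,g'(\skippay)$ is non-decreasing. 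Steps 1--2 of the plan, by contrast, are pure calculus and suffice for the main $\skippay_{\rev}\ge \skippay^*$ conclusion.
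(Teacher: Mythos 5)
Your treatment of the first claim is correct and is exactly the paper's argument: differentiate $\REV(\valueFunc(\skippay),\skippay,\type)=\skippay\bigl(1-\dist_{1-\type}(\skippay/\valueFunc(\skippay))\bigr)$, observe that the derivative is
$1-\dist_{1-\type}\bigl(\tfrac{\skippay}{\valueFunc(\skippay)}\bigr)+\skippay\,\dense_{1-\type}\bigl(\tfrac{\skippay}{\valueFunc(\skippay)}\bigr)\tfrac{\skippay\valueFunc'(\skippay)-\valueFunc(\skippay)}{\valueFunc(\skippay)^2}$,
and note that nonnegativity of this expression on $[0,\skippay^*]$ is precisely the stated hazard-rate inequality, so the revenue is non-decreasing there and $\skippay_{\rev}\geq\skippay^*$.

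For the second (MHR) claim, however, your proposal stops short of a proof and you say so yourself: the reparameterization $\tilde R(y)=g^{-1}(y)\bigl(1-\dist_{1-\type}(y)\bigr)$ with $y=g(\skippay)=\skippay/\valueFunc(\skippay)$ only yields unimodality if $g^{-1}$ is itself log-concave (or some comparable condition holds), and neither monotonicity of $g^{-1}$ nor concavity of $\valueFunc$ gives you that; log-concavity of the survival function times an arbitrary increasing factor is not unimodal in general. So as written the second claim is not established. For comparison, the paper closes this step much more bluntly: it notes the left-hand side of the inequality is the inverse hazard rate composed with the non-decreasing map $\skippay\mapsto\skippay/\valueFunc(\skippay)$, hence non-increasing under MHR and equal to $0$ at $\pnosale$, and then simply \emph{asserts} that the right-hand side $\frac{\skippay}{\valueFunc(\skippay)}\bigl(1-\frac{\skippay\valueFunc'(\skippay)}{\valueFunc(\skippay)}\bigr)$ is monotone increasing and vanishes at $0$, so the two curves cross exactly once and the crossing is the optimum. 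The non-monotonicity you flag is real --- one minus the elasticity of a concave $\valueFunc$ vanishing at $0$ can rebound (e.g.\ for a value function that is sharply concave early and then nearly linear up to $\pnosale$), so the paper's justification of this step relies on an unproved monotonicity assertion rather than on the weaker structure you were trying to exploit; your instinct that extra regularity on $\valueFunc$ (such as monotonicity of $\skippay\,g'(\skippay)$, i.e.\ of the right-hand side itself) is what is really needed matches what the paper implicitly assumes. In short: part one is complete and identical to the paper; part two has a genuine gap in your write-up, which the paper bridges only by asserting the very monotonicity you doubted.
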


\begin{proof}
We first take the derivative of revenue with respect to the price,
\begin{align*}
    \frac{\diff}{\diff\skippay} \REV(\valueFunc(\skippay), \skippay, \type)
    &= 1-\dist_{1-\type}\left(\frac{\skippay}{\valueFunc(\skippay)}\right) + \skippay\dense_{1-\type}\left(\frac{\skippay}{\valueFunc(\skippay)}\right)\frac{\skippay\valueFunc'(\skippay)-\valueFunc(\skippay)}{\valueFunc^2(\skippay)}
\end{align*}
We need this derivative to be positive along the entirety of the interval $[0, \skippay^*]$,
\begin{align*}
  \frac{1-\dist_{1-\type}\left(\frac{\skippay}{\valueFunc(\skippay)}\right)}{\dense_{1-\type}\left(\frac{\skippay}{\valueFunc(\skippay)}\right)} &\geq \skippay\frac{\valueFunc(\skippay)-\skippay\valueFunc'(\skippay)}{\valueFunc(\skippay)^2}\\
  & \geq \frac{\skippay}{\valueFunc(\skippay)}\left(1 - \frac{\skippay\valueFunc'(\skippay)}{\valueFunc(\skippay)}\right). \numberthis \label{align:HRutil}
\end{align*}
The second statement follows from the fact that 
$$
\frac{1-\dist_{1-\type}\left(\frac{\pnosale}{\valueFunc(\pnosale)}\right)}{\dense_{1-\type}\left(\frac{\pnosale}{\valueFunc(\pnosale)}\right)} = 0,
$$ 
and the term on the right hand side in (\ref{align:HRutil}) is both monotone increasing and approaches $0$ as $\skippay$ approaches $0$. 
Given that the term on the left hand side in (\ref{align:HRutil}) has MHR and therefore is also monotone, this maximization problem has a single unique solution. 
\end{proof}

Here, we interpret what this says for type distributions and their revenue-optimal payments. 
This lemma's first condition is a lower bound on the inverse hazard rate of the impatience distribution.
The larger the domain over which this holds, the larger lower bound we have for the revenue-optimal payment. 
As the density of the most impatient players increases, the domain where this condition is satisfied can only grow. 
One might worry that having the high density of patient non-buyers necessary for Theorem \ref{thm:util_nosale} could violate Lemma \ref{lma:con-rev}.
However, the density of players who would never buy are not considered directly by this condition; for large enough $\type$, $1-\type \neq \nicefrac{\skippay}{\valueFunc(\skippay)}$ for any price, so their density never appears in the inverse hazard rate.
Figure \ref{fig:rev_opt_impatient_flat} shows an example supporting this intuition: ignoring the empty circles for the moment, heavier tails drive up the revenue-optimal price while keeping the utility-optimal price high. 
While these distributions were hand-crafted for illustrative purposes, the same phenomena occur in more natural heavy-tailed distributions.

\begin{figure}[h]
     \centering
     \begin{subfigure}[b]{0.45\textwidth}
         \centering
         \includegraphics[trim={0.25cm 0 19cm 0},clip,width=0.8\textwidth]{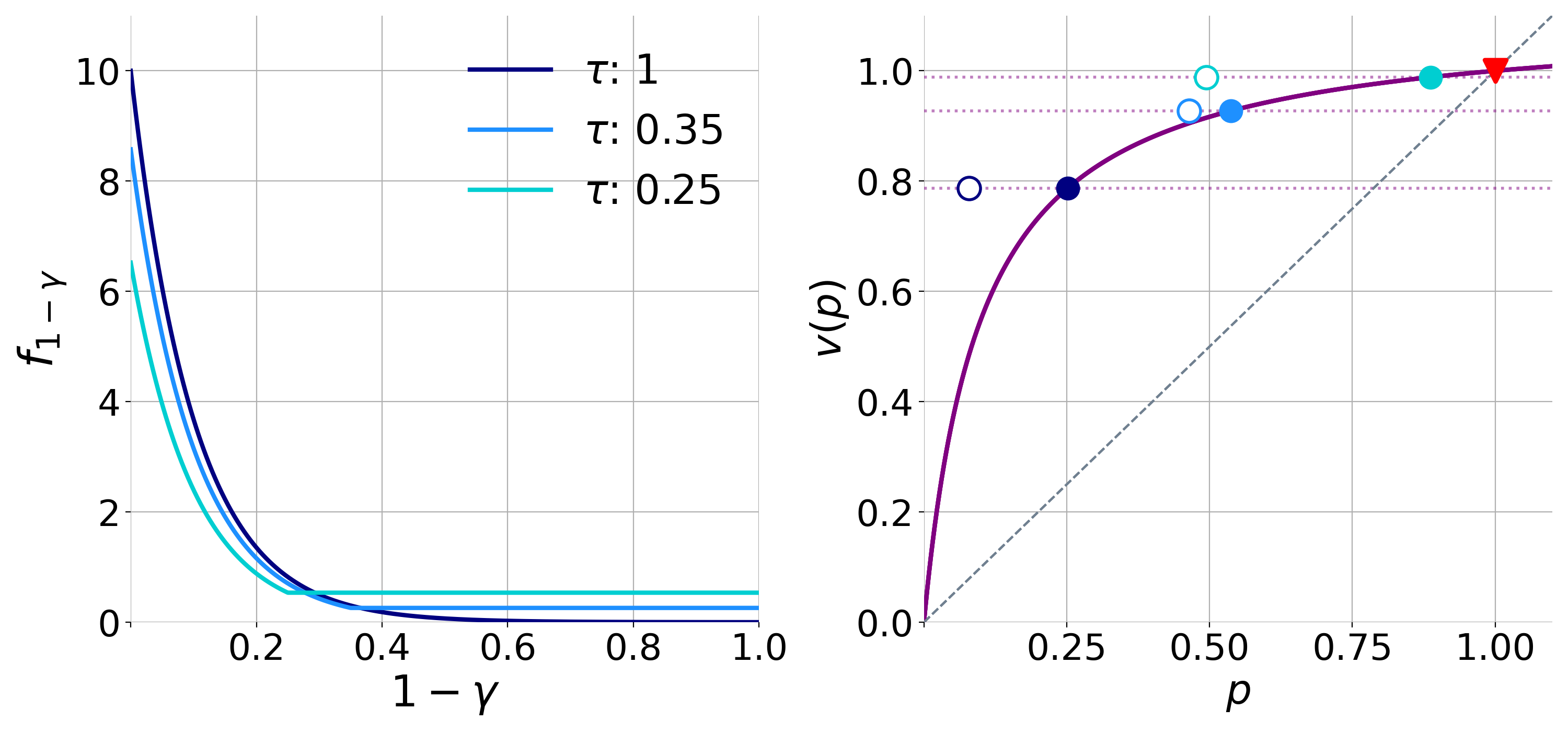}
         \caption{Probability density functions of an exponential distribution with rate parameter $\lambda = 10$ and density has been made a constant $\dense_{1-\type}(\tau)$ after $\tau$, with $\tau \in \{1, 0.35, 0.25\}$, and re-normalized.$\newline$$\newline$}
         \label{subfig:rev_exp_flat}
     \end{subfigure}\hfill%
     \begin{subfigure}[b]{0.45\textwidth}
         \centering
         \includegraphics[trim={19.5cm 0 0.5cm 0},clip,width=0.8\textwidth]{plots/impatient_people_rev_price_flat.png}
         \caption{Revenue optimal prices (\revcircle{0.5}{navy}{navy}, \revcircle{0.5}{blue}{blue}, \revcircle{0.5}{lightblue}{lightblue}) with respect to $\valueFunc(\skippay)$ seen increasing with the density of \emph{impatient} players while the utility optimal price ({\color{red} \isotri\blacktriangledown}) is held high. Revenue optimal prices (\revcircle{0.5}{white}{navy}, \revcircle{0.5}{white}{blue}, \revcircle{0.5}{white}{lightblue}) with respect to constant value functions $\in \{\valueFunc(\skippay_{\revcircle{0.2}{navy}{navy}}), \valueFunc(\skippay_{\revcircle{0.2}{blue}{blue}}),\valueFunc(\skippay_{\revcircle{0.2}{lightblue}{lightblue}})\}$ seen lower than non-constant counterparts.}
         \label{subfig:rev_val_flat}
     \end{subfigure}
     \caption{Revenue Optimal Price as the Proportion of Impatient People Increase}
     \label{fig:rev_opt_impatient_flat}
\end{figure}

The second part of the lemma gives us an analogue to virtual values.
Traditionally, virtual values are defined with respect to a constant value function. 
Suppose we define a constant value function as $\valueFunc = \valueFunc(\skippay^*)$ where $\skippay^*$ is some fixed point, the resulting virtual values will be greater for any point $\skippay^{**}>\skippay^*$ than defined with respect to $\valueFunc(\skippay^{**})$.
To see this, note that the virtual value for a constant value function would be identical except for the term $\left(1 - \nicefrac{\skippay\valueFunc'(\skippay)}{\valueFunc(\skippay)}\right)$ which is always less than one while the value function is increasing. 
This demonstrates the upward pressure an increasing value function has on pricing.  
In Figure \ref{fig:rev_opt_impatient_flat}, we see that in action: for a flat value function (the dotted purple lines) the revenue optimal strategy sets lower prices. 

Putting together the conditions from Theorem \ref{thm:util_nosale} and Lemma \ref{lma:con-rev}, we characterize a price that provides both high utility and high revenue.
\begin{corollary}\label{cor:util_rev_price}
Let $\skippay_{\rev}$ be the revenue optimal price and $\slope_{\rev}$ be the slope of the value function at $\valueFunc(\skippay_{\rev})$. 
If the condition in Theorem~\ref{thm:util_nosale} holds, then setting the revenue-optimal price $\skippay_{\rev}$ achieves utility $\expect[\type]{\utilityFunc_{\max}(\valueFunc(\skippay_{\rev}),\skippay_{\rev} ,\type)}\geq \UTIL_{\max}- \slope_{\rev}(\pnosale-\skippay_{\rev})$.
\end{corollary}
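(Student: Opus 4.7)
The plan is to bound the utility loss from using $\skippay_{\rev}$ instead of $\pnosale$ by comparing both expressions directly, relying only on (a) the characterization of utility at $\pnosale$ given by \Cref{thm:util_nosale} and (b) the concavity of $\valueFunc$.

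First, I would observe that under the hypothesis of \Cref{thm:util_nosale} and the fact that $\valueFunc(\skippay) = \pnosale$ for all $\skippay \geq \pnosale$, the utility at $\pnosale$ equals $\UTIL_{\max}$. Since $\valueFunc(\pnosale) = \pnosale$, we get $\valueFunc(\pnosale) - \pnosale = 0$, so no player ever buys at $\pnosale$ and every player's realized utility is simply $\type\pnosale$. Thus $\UTIL_{\max} = \expect[\type]{\type}\pnosale$ (using that \Cref{thm:util_nosale} gives $\expect[\type]{\utilityFunc_{\max}(\valueFunc(\pnosale),\pnosale,\type)} \geq \UTIL_{\max} - \epsilon$ for every $\epsilon>0$, combined with the fact that $\UTIL_{\max}$ is the supremum).

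Second, I would lower bound $\expect[\type]{\utilityFunc_{\max}(\valueFunc(\skippay_{\rev}),\skippay_{\rev},\type)}$ by discarding the ``buy'' branch of the max: for each type, $\utilityFunc_{\max}(\valueFunc(\skippay_{\rev}),\skippay_{\rev},\type) = \max(\type\valueFunc(\skippay_{\rev}),\, \valueFunc(\skippay_{\rev})-\skippay_{\rev}) \geq \type\valueFunc(\skippay_{\rev})$. Taking expectations gives $\expect[\type]{\utilityFunc_{\max}(\valueFunc(\skippay_{\rev}),\skippay_{\rev},\type)} \geq \expect[\type]{\type}\valueFunc(\skippay_{\rev})$. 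This is the key simplification, and it is valid because buyers only ever strictly improve on the non-buying utility.

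Third, I would subtract the two expressions and apply concavity of $\valueFunc$. Concavity gives $\valueFunc(\pnosale) - \valueFunc(\skippay_{\rev}) \leq \slope_{\rev}(\pnosale - \skippay_{\rev})$, where $\slope_{\rev}$ is the slope of $\valueFunc$ at $\skippay_{\rev}$. Combined with $\expect[\type]{\type} \leq 1$, this yields
\begin{align*}
\UTIL_{\max} - \expect[\type]{\utilityFunc_{\max}(\valueFunc(\skippay_{\rev}),\skippay_{\rev},\type)}
&\leq \expect[\type]{\type}\bigl(\pnosale - \valueFunc(\skippay_{\rev})\bigr) \leq \slope_{\rev}(\pnosale - \skippay_{\rev}),
\end{align*}
which rearranges to the claimed bound. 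There is no real obstacle here; the only subtlety is confirming that \Cref{thm:util_nosale} actually identifies $\UTIL_{\max}$ with $\expect[\type]{\type}\pnosale$ rather than merely approximating it, which follows because $\valueFunc$ is flat on $[\pnosale,\infty)$ and so the utility at $\pnosale$ coincides with the supremum over all $\skippay \geq \pnosale$.
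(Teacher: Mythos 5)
Your proposal is correct and follows essentially the same route as the paper's (much terser) proof: identify $\UTIL_{\max}$ with the utility at $\pnosale$, then bound the loss at $\skippay_{\rev}$ by the slope of the value function at the lower price times the gap $\pnosale-\skippay_{\rev}$. Your write-up just makes explicit the steps the paper leaves implicit—dropping the buy branch of the max, the tangent-line concavity bound, and $\expect[\type]{\type}\leq 1$—all of which are valid.
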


\begin{proof}
Given the condition in \Cref{thm:util_nosale}, the utility-optimal price is $\pnosale$. 
We upper bound the amount of utility loss by the distance between the two prices and the slope of the value function at the lower price.
\end{proof}

This simple fact demonstrates that we can ensure near-optimal utility by setting prices high enough to reach a near-flat region of the value function.
As mentioned in the beginning of the section, while high revenue-optimal and utility-optimal prices seem at odds with traditional mechanism design, many game designers are often encouraged to increase their prices for microtransactions and find increases in both overall retention and revenue \cite{levy_2016}. 

\begin{figure}[h]
     \centering
     \begin{subfigure}[b]{0.45\textwidth}
         \centering
         \includegraphics[trim={0.5cm 0 19cm 0},clip,width=0.8\textwidth]{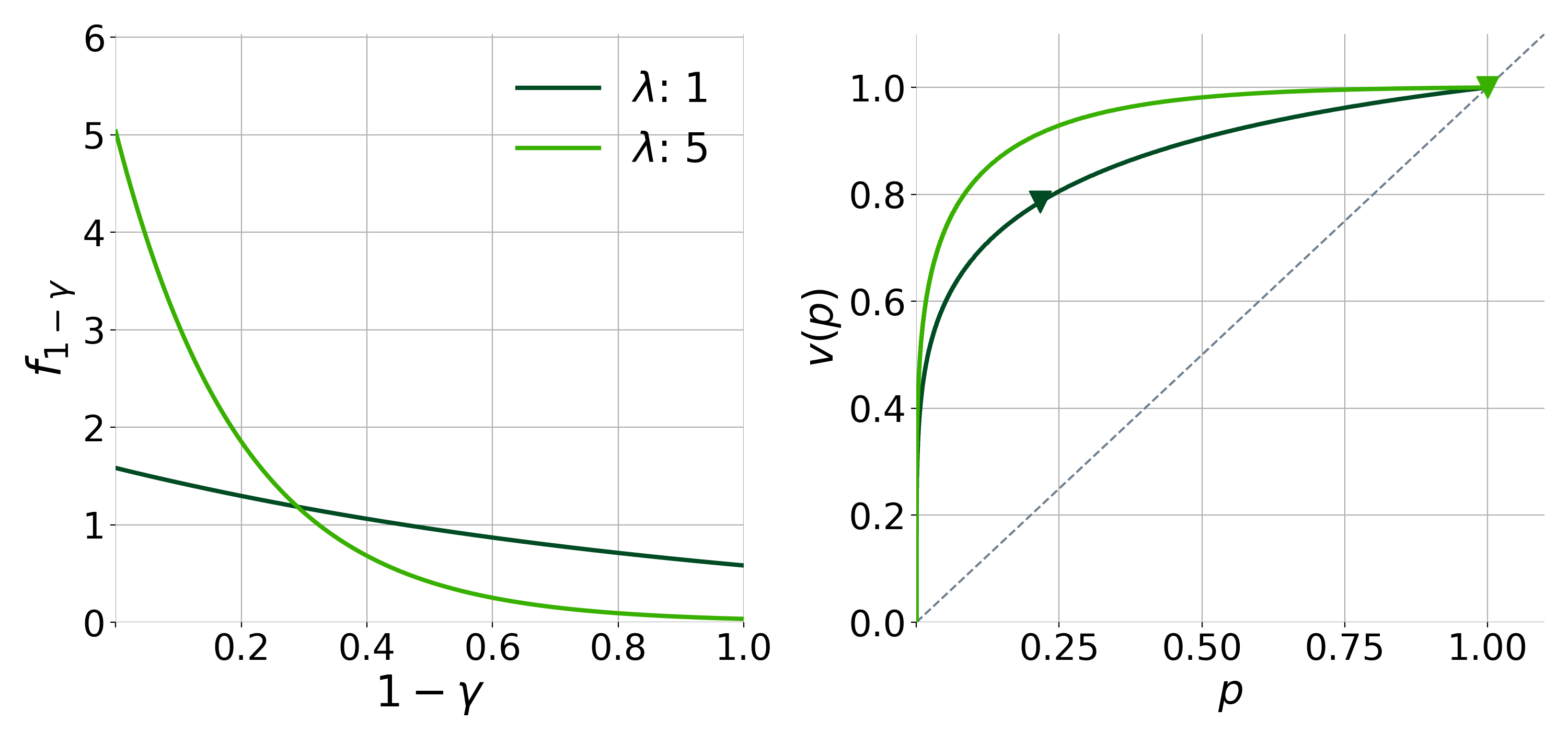}
         \caption{Probability density functions of an exponential distribution with rate parameter $\lambda \in \{1, 5\}$ as $\lambda$ increases so does the proportion of patient people.}
         \label{subfig:util_exp_inc_prsale}
     \end{subfigure}\hfill%
     \begin{subfigure}[b]{0.47\textwidth}
         \centering
         \includegraphics[trim={19cm 0 0.5cm 0},clip,width=0.8\textwidth]{plots/pr_sale_util_price_increase.png}
         \caption{Utility optimal prices ({\color{darkgreen} \isotri\blacktriangledown}, {\color{green} \isotri\blacktriangledown}) seen increasing with the density of patient players and plotted on $\valueFunc(\skippay)$'s generated from distributions in (a).}
         \label{subfig:util_val_inc_prsale}
     \end{subfigure}
     \newline
     \begin{subfigure}[b]{0.44\textwidth}
         \centering
         \includegraphics[trim={0.65cm 0 18.5cm 0},clip,width=0.8\textwidth]{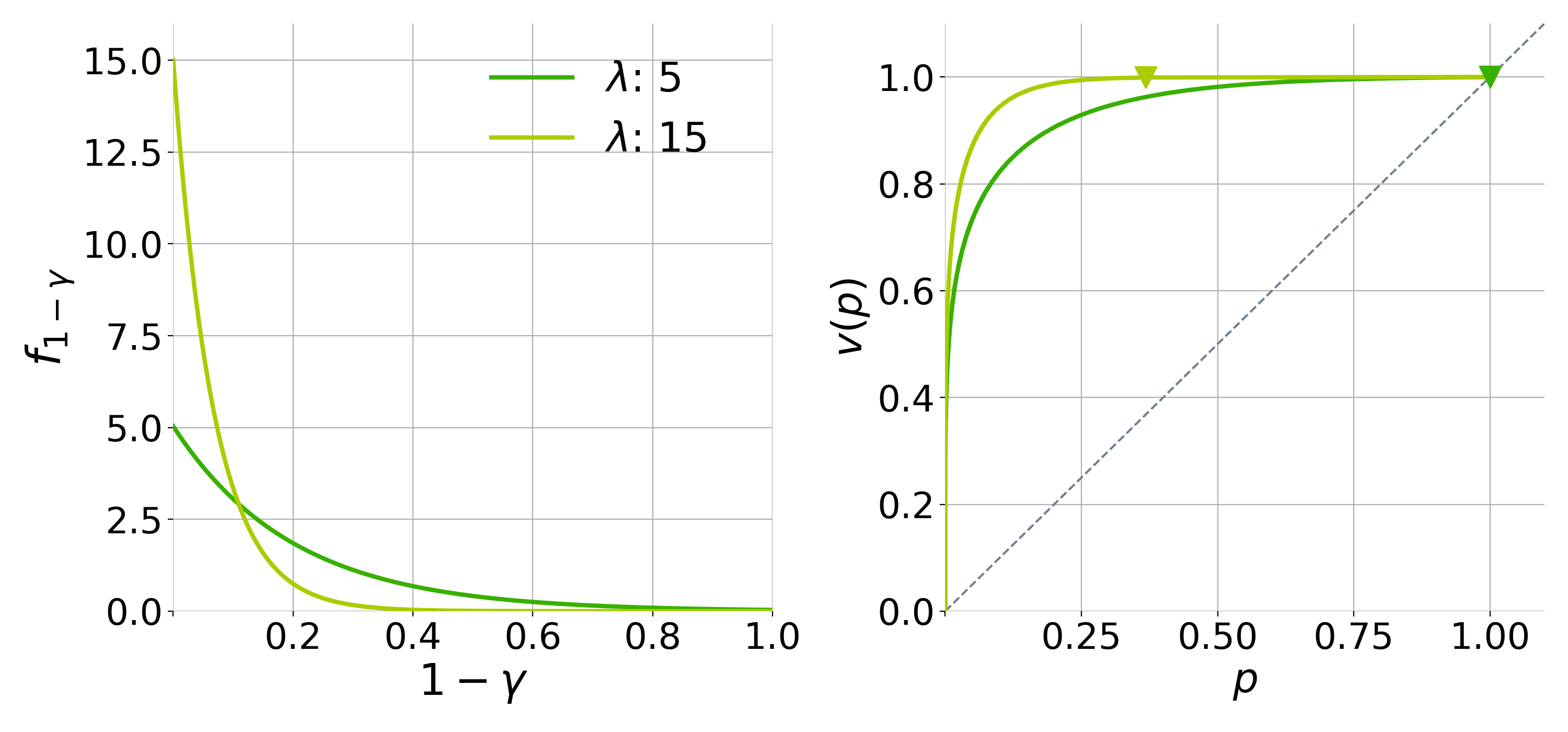}
         \caption{Probability density functions of an exponential distribution with $\lambda \in \{5, 15\}$ as $\lambda$ increases so does the proportion of patient people.}
         \label{subfig:util_exp_dec_prsale}
     \end{subfigure}\hfill%
     \begin{subfigure}[b]{0.45\textwidth}
         \centering
         \includegraphics[trim={20cm 0 0.5cm 0},clip,width=0.8\textwidth]{plots/pr_sale_util_price_decrease.png}
         \caption{Utility optimal prices ({\color{green} \isotri\blacktriangledown}, {\color{lightgreen} \isotri\blacktriangledown}) seen \emph{decreasing} with the density of patient players and plotted on $\valueFunc(\skippay)$'s generated from distributions in (a).}
         \label{subfig:util_val_dec_prsale}
     \end{subfigure}
     \caption{Utility Optimal Prices on Value Functions Dependent on Probability of Sale}
     \label{fig:vary_util_opt_price}
\end{figure}

\subsection{Fully-Sensitive Players}
The results of the previous section depend on both the value function and the type distribution. 
It is not obvious if these results hold when the value function itself depends on the type distribution. 
We now show empirically that for value functions that are linear in the probability of sale we can still get high utility and revenue optimal prices but sometimes optimal prices can be pushed down if the value function grows too slowly with price.

We first define a class of linear fully-sensitive value functions. 
For a fixed type distribution $\typeDist$ its corresponding $c$-linear value function is the function that is $0$ when $\skippay=0$ otherwise is the function that solves $c(1-\typeDist\left(1-\nicefrac{\skippay}{\valueFunc(\typeDist,\skippay)}\right)) = \valueFunc(\typeDist,\skippay)$. 
This ensures that values grow linearly as the probability of sale decreases but that the value's relationship with price is now dependent on the underlying type distribution. 
Given $\typeDist(x)$ is a fixed function with range $(0,1)$ we can simply solve for $\skippay$ for each value of $\valueFunc$ in range $(0,c)$. 
For all the distributions we consider, these value functions satisfy the desired properties listed in \Cref{sec:model}. 

\Cref{fig:vary_util_opt_price} shows how the utility-optimal price is affected by increasing the density of impatient players.
For a while, our intuition from the price-sensitive model holds (\Cref{subfig:util_val_inc_prsale}); create enough patient players and the utility-optimal price goes to the max. 
However, if there are too many patient players in the population, this trend flips and the utility-optimal price starts to drop (\Cref{subfig:util_val_dec_prsale}).
As distributions become more and more concentrated on patient players, the induced value functions plateau earlier.
In turn, the utility in this region for non-buyers remains relatively constant, while buyers' utility drops linearly with $\skippay$.
This makes sense since even a small price means few players will buy skips, and raising the price minimally affects the probability of sale. 
One way to view this result is that distributions can only become so patient before the induced value functions look insensitive to prices. 

In \Cref{fig:rev_val_inc_prsale}, we see a familiar trend: heavier tails usually mean higher revenue-optimal prices. 
However, it becomes more difficult to drive prices to the extremes in cases where the utility-optimal price is also high. 
This is because as the function gets flatter the upward pressure exerted on the price, as seen in \Cref{fig:rev_opt_impatient_flat}, is lessened.

\begin{figure}[h]
     \centering
     \begin{subfigure}[b]{0.45\textwidth}
         \centering
         \includegraphics[trim={0.5cm 0 19cm 0},clip,width=0.8\textwidth]{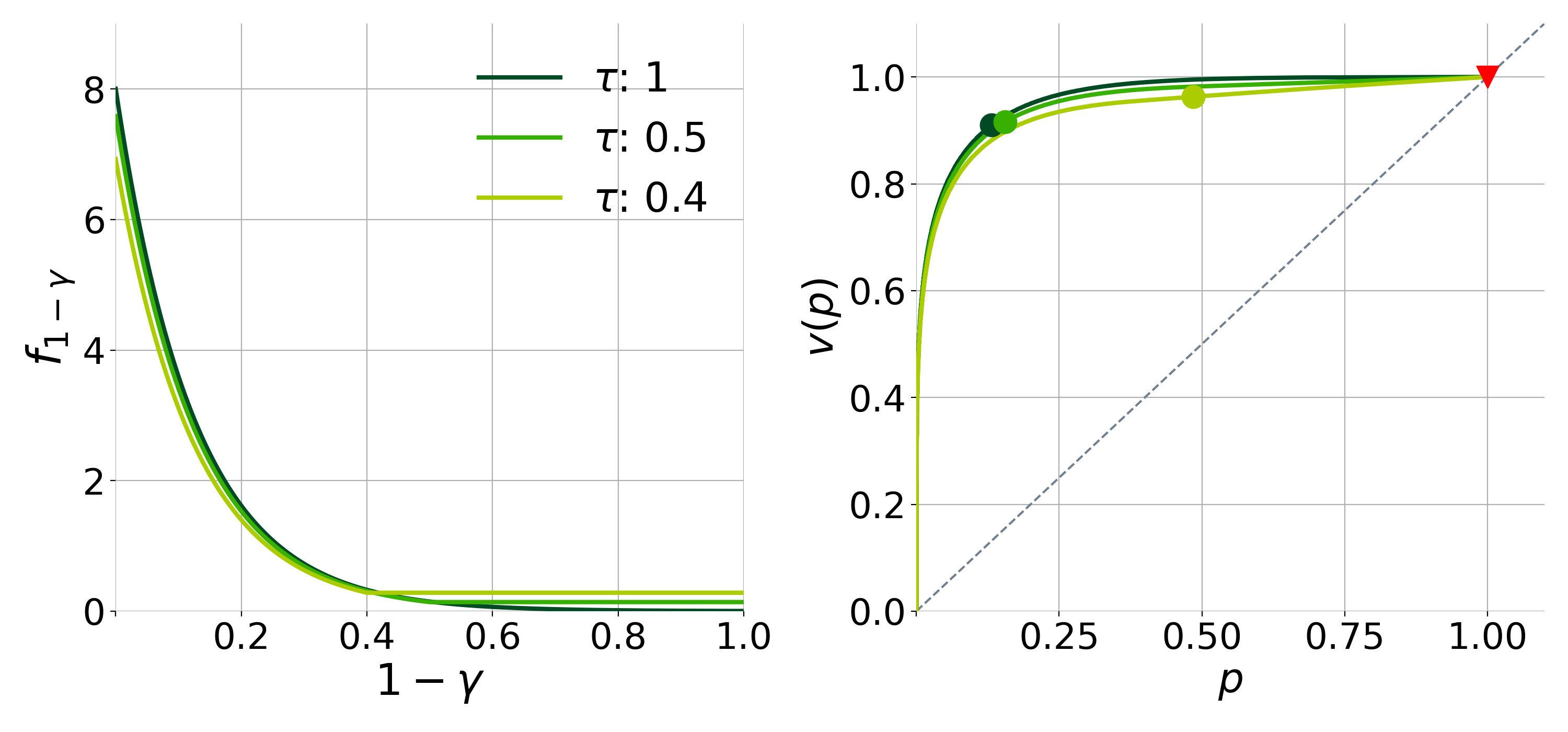}
         \caption{Probability density functions of an exponential distribution with rate parameter $\lambda = 8$ and density made a constant $\dense_{1-\type}(\tau)$ after $\tau$, with $\tau \in \{1, 0.5, 0.4\}$, and re-normalized.}
     \end{subfigure}\hspace{1cm}
     \begin{subfigure}[b]{0.45\textwidth}
         \centering
         \includegraphics[trim={19cm 0 0.5cm 0},clip,width=0.8\textwidth]{plots/pr_sale_rev_price_increase.png}
         \caption{Revenue optimal prices (\revcircle{0.5}{darkgreen}{darkgreen}, \revcircle{0.5}{green}{green}, \revcircle{0.5}{lightgreen}{lightgreen}) seen increasing with the density of impatient players plotted on $\valueFunc(\skippay)$'s generated from distributions in (a) while the utility optimal price ({\color{red} \isotri\blacktriangledown}) is held high.}
         \label{subfig:rev_val_inc_prsale}
     \end{subfigure}
     \caption{Revenue Optimal Prices when Values are Dependent on Probability of Sale.}
     \label{fig:rev_val_inc_prsale}
\end{figure}

This section shows that when value functions are sensitive to changes in price across the domain, the designer's job is relatively simple.
The utility-optimal price often dominates the revenue-optimal price, and so setting a high price and lowering it over time provides little risk in losing players. 
In fact, this is a strategy that game designers are sometimes advised to implement \cite{levy_2016}.
However, when value functions have large regions of insensitivity, as seen in \Cref{fig:pr_sale_flatregions}, it is not obvious how to trade off buyer utility and revenue beyond observing that one should not place the price in a high sensitivity region. 
This trade-off is the focus of the next section.

\begin{figure}[h]
     \centering
     \begin{subfigure}[b]{0.45\textwidth}
         \centering
         \includegraphics[trim={0.5cm 0 18.5cm 0},clip,width=0.8\textwidth]{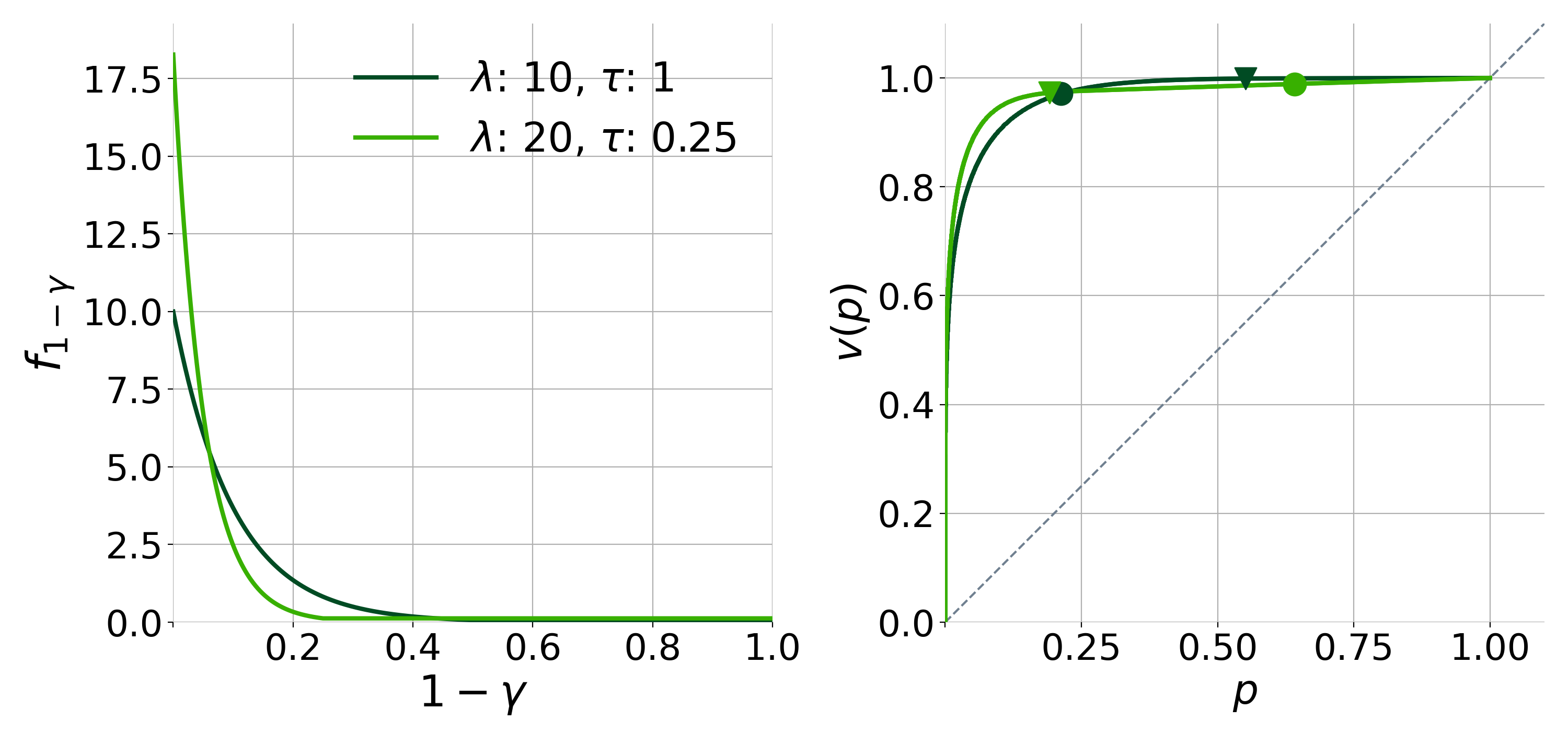}
         \caption{Probability density functions of exponential distributions with $(\lambda, \tau) \in \{(10, 1), (20, 0.25)\}$ with density $\dense_{1-\type}(\tau)$ after $\tau$ and re-normalized.}
     \label{subfig:pr_sale_flat_exp}
     \end{subfigure}\hfill%
     \begin{subfigure}[b]{0.46\textwidth}
         \centering
         \includegraphics[trim={20cm 0 0.5cm 0},clip,width=0.8\textwidth]{plots/pr_sale_flat_regions.png}
         \caption{Revenue optimal prices (\revcircle{0.5}{darkgreen}{darkgreen}, \revcircle{0.5}{green}{green}) seen increasing as the density of patient players concentrates and utility optimal prices ({\color{darkgreen} \isotri\blacktriangledown}, {\color{green} \isotri\blacktriangledown}) seen \emph{decreasing}.}
     \label{subfig:pr_sale_flat_util_low}
     \end{subfigure}
     \caption{Effect of Concentrating Patient Players on Revenue and Utility Optimal Prices.}
     \label{fig:pr_sale_flatregions}
\end{figure}

\section{Optimal Pricing for Insensitive Players}\label{sec:rev} 
We now tackle optimizing revenue for players insensitive to skip prices. 
We also note that we can transform fully- and price-sensitive value functions into an insensitive value function, allowing the results in this section to provide a good approximation when these sensitive value functions have large regions of low sensitivity. 
By definition, insensitive players have a minimum price below which their value is $0$. 
However, this point does not affect the analysis, so we assume it is at $0$ for the remainder of the section and let $\valueFunc \coloneqq \valueFunc_c(\skippay) $.
For more discussion see \Cref{appendix:cons_value}.

In this section, we leverage the analytic tractability afforded when players are insensitive to investigate the harder problem of repeated task pricing. 
Our main result is a simple pricing scheme that gives a constant approximation to the revenue of the optimal pricing scheme. 
This approximation ratio holds even if the designer is allowed to set multiple prices based on how much of the task players complete. 
Not only is our simple pricing scheme interesting theoretically, it is also practically relevant: we often see pricing schemes in which a single price is offered regardless of how much time has elapsed, even in mid-core (\emph{Clash of Clans}) and casual games (\emph{AdVenture Capitalist}). 
We end the section by testing our results in richer environments through simulations.

\subsection{Repeated Tasks}
We now introduce the further modeling assumptions and notation that we will need to study repeated tasks.
We make two main assumptions: player's types remain constant across tasks, and players are myopic to future value.
That is, they only consider the value of the task at hand rather than accounting for future value they might obtain by completing subsequent tasks. 
This matches experimental results on human behaviour showing that people often bracket their choices, optimizing piecemeal, rather than considering them jointly \cite{tversky1985framing}.
Mobile games often exacerbate this effect, obscuring the amount of time needed to complete subsequent tasks. 
For example, Clicker Heroes only shows the timers for current tasks, hiding all future wait timers. 

\begin{definition}
    At the beginning of each task $\task$, the designer sets a take-it-or-leave-it price $\skippay_{\task}$. This results in an infinite sequence of prices $\{\skippay_{\task}\}_{\task\in\mathbb{N}}$.
\end{definition}
\noindent One might imagine that game designers try to extract as much revenue from their players as possible. 
However, their task is not so simple. 
The mobile game marketplace is highly saturated with offerings in every conceivable permutation of aesthetic elements; e.g., \textit{Cow Clicker}, \textit{Cookie Clicker}, \textit{Planet Clicker}, \textit{Room Clicker}, \textit{Battery Clicker}, and \textit{Candy Clicker} are all real games!
In such a marketplace, players who are retained in one round may not be retained in the next; designers have to worry about losing players to these outside options at every round of the game. 
We characterize the appeal of these outside options as a retention threshold resampled from a static distribution at the beginning of every task.
This utility check means a game is always at a risk of losing its least satisfied players.
\begin{definition}
At the end of each task $\task \in [1, \infty)$, all players realize a retention threshold $\retention_{\task} \sim \retDist$, where $\supp(\retDist) = [0,\infty)$.
A player quits the game immediately if
$$
    \utilityFuncith{\task}_{\max}(\valueFunc(\skippay_{\task}),\skippay_{\task},\type)<\retention_{\task},
$$
where $\utilityFuncith{\task}_{\max}(\valueFunc(\skippay_{\task}),\skippay_{\task},\type) = \max(\type\valueFunc(\skippay_{\task}), \valueFunc(\skippay_{\task}) - \skippay_{\task})$.
\end{definition}

\subsection{An Upper Bound on Optimal Revenue}
Characterizing the optimal sequence of prices is difficult, especially when one must worry about the type distribution changing over time due to retention. 
However, the revenue generated when players' types are known to the designer is not only easy to characterize but upper bounds the revenue when types are unknown. 
Within this \emph{known types} setting, the optimal strategy sets $\skippay_i = (1-\type_{i})\valueFunc$ for each player $i$ which is guaranteed to sell and static across tasks.
We use the term \emph{unknown types} for the setting where the designer has only a prior over players' types and must price each round anonymously.
We assume designers exponentially discount future rewards characterized by $\gDiscount$.

\begin{proposition}\label{prop:rev_known}
The revenue generated from offering a static skip price $\skippay_i$ to a player $i$ with known type $\type_i$ and game designer discount $\gDiscount$ is
$
\REV(\valueFunc, \skippay_i, \type_i) = \frac{\skippay_i}{1-\gDiscount\retDist(\valueFunc-\skippay_i)}$, if $\skippay_i\leq(1-\type_i)\valueFunc$ and $0$ otherwise.
\end{proposition}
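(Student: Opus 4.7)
The plan is to split into two cases based on whether the price $\skippay_i$ exceeds the player's marginal willingness to pay $(1-\type_i)\valueFunc$. In the case $\skippay_i > (1-\type_i)\valueFunc$, the marginal value a skip confers is strictly less than its price, so a utility-maximizing player never buys. No sales occur in any round, and the total expected discounted revenue is zero, which matches the trivial branch of the piecewise formula.

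For the main case $\skippay_i \leq (1-\type_i)\valueFunc$, I would first observe that buying is the myopically optimal action in every task: the hypothesis rearranges to $\type_i \valueFunc \leq \valueFunc - \skippay_i$, so the maximum in $\utilityFunc_{\max}(\valueFunc, \skippay_i, \type_i) = \max(\type_i \valueFunc, \valueFunc - \skippay_i)$ is attained by buying and equals the constant $\valueFunc - \skippay_i$. Since the price is static, the value is insensitive, and the type is fixed, the realized utility the player receives per task is the same deterministic number every round. The retention check at the end of each task therefore amounts to an independent Bernoulli trial with success (retention) probability $\retDist(\valueFunc - \skippay_i)$.

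From here the calculation is routine. Writing $\retention^* := \retDist(\valueFunc - \skippay_i)$, the probability that the player is still present at task $\task$ is $(\retention^*)^{\task-1}$, and each such participation contributes $\skippay_i$ in revenue discounted by $\gDiscount^{\task-1}$. Summing the resulting geometric series,
\begin{equation*}
    \sum_{\task=1}^{\infty} (\gDiscount \retention^*)^{\task-1} \skippay_i = \frac{\skippay_i}{1 - \gDiscount \retDist(\valueFunc - \skippay_i)},
\end{equation*}
yields the claimed closed form; convergence is immediate because $\gDiscount \in [0,1)$ and $\retention^* \in [0,1]$.

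There is no substantive obstacle, but the conceptual content worth spelling out carefully is that the combination of a static price, an insensitive value function, and a known type collapses the retention process into an i.i.d.\ sequence; this is what lets the expected discounted revenue reduce to a single geometric sum rather than requiring one to track any evolving state. Once any of these assumptions is relaxed, as happens later in the paper, this clean reduction breaks and a more delicate argument is required.
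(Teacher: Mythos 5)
Your proof is correct and takes essentially the same route as the paper's: in the buying case the per-task utility is the constant $\valueFunc - \skippay_i$, so retention is an i.i.d.\ event with probability $\retDist(\valueFunc - \skippay_i)$ each round, and the discounted revenue is the geometric sum $\skippay_i\sum_{\task\ge 0}(\gDiscount\,\retDist(\valueFunc-\skippay_i))^{\task}$. Your explicit handling of the no-sale case and of the tie at $\skippay_i = (1-\type_i)\valueFunc$ is slightly more careful than the paper's, but the argument is the same.
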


\begin{proof}
    When types are known, the only source of uncertainty to the designer is a player's retention threshold; the marginal value a player has for a skip, $\valueFunc_{m} = (1-\type)\valueFunc$, is fully known to the designer. 
    As a game designer can guarantee a sale, they will always set a price such that skipping is at least weakly preferred to waiting. 
    Furthermore, as $\type$'s do not change from task to task and $\retention$'s are redrawn i.i.d.\@ each task, task $2$ looks identical to task $1$ should the player remain in the game. 
    Therefore, the designer should choose a single static price to charge each player across tasks and we can write a player's projected utility for staying in a task as $\utilityFunc = \valueFunc - \skippay$. 
    
    Across multiple rounds, a player is in the game at task $\task$ if $\utilityFunc \geq \retention_{i}$ for all previous rounds $i\leq \task$.
    Summing over all rounds with designer discount $\gDiscount$ gives the expected revenue generated by that player:
    $$
    \REV(\valueFunc, \skippay, \type) = \begin{cases} 
          \skippay\sum_{i=0}^{\infty}\gDiscount^{i}\retDist(\valueFunc - \skippay)^i = \frac{\skippay}{1-\gDiscount\retDist(\valueFunc-\skippay)}, & \mathrm{if }\ \skippay\leq\valueFunc_{m}; \\
          0, & \mathrm{otherwise}.
       \end{cases}
    $$
\end{proof}

We now make an assumption on the retention distribution that is similar to regularity (many common distributions satisfy it e.g., uniform, exponential, Pareto).
\begin{definition}
    We say a retention distribution $\retDist(x)$ is \emph{retention regular} if $x-\frac{1-\gDiscount\retDist(\valueFunc-x)}{\gDiscount\retDens(\valueFunc-x)}$ is monotone non-decreasing.
\end{definition}\label{def:regularish}

\begin{proposition}\label{prop:pComplete}
The optimal price for a player $i$ given known types, under retention regularity of the retention distribution, is
$$
    \skippay_i=\min\left((1-\type_i)\valueFunc, 
    \frac{1-\gDiscount\retDist(\valueFunc-\skippay)}{\gDiscount\retDens(\valueFunc-\skippay)}
    \right),
$$
where $\skippay$ is the unique solution to the equation $q = \frac{1-\gDiscount\retDist(\valueFunc-q)}{\gDiscount\retDens(\valueFunc-q)}.$
\end{proposition}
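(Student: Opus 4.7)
The plan is to treat this as a straightforward constrained optimization of the closed-form revenue expression from \Cref{prop:rev_known}, namely $\REV(\valueFunc, \skippay_i, \type_i) = \frac{\skippay_i}{1-\gDiscount\retDist(\valueFunc-\skippay_i)}$ on the feasible set $\skippay_i \in [0, (1-\type_i)\valueFunc]$ (outside this set the revenue is $0$, so any interior optimum must lie in it). I would differentiate the objective in $\skippay$ and compute, using the chain rule on $1-\gDiscount\retDist(\valueFunc-\skippay)$, that
\[
\frac{\diff}{\diff\skippay}\frac{\skippay}{1-\gDiscount\retDist(\valueFunc-\skippay)}
= \frac{\bigl(1-\gDiscount\retDist(\valueFunc-\skippay)\bigr) - \skippay\,\gDiscount\retDens(\valueFunc-\skippay)}{\bigl(1-\gDiscount\retDist(\valueFunc-\skippay)\bigr)^{2}}.
\]
Setting the numerator to zero yields precisely the fixed-point equation $\skippay = \tfrac{1-\gDiscount\retDist(\valueFunc-\skippay)}{\gDiscount\retDens(\valueFunc-\skippay)}$ that defines $q$ in the statement.

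Next, I would invoke retention regularity to handle existence, uniqueness, and sign analysis. Since the numerator above equals $\gDiscount\retDens(\valueFunc-\skippay)\bigl[\tfrac{1-\gDiscount\retDist(\valueFunc-\skippay)}{\gDiscount\retDens(\valueFunc-\skippay)}-\skippay\bigr]$ and $\gDiscount\retDens(\valueFunc-\skippay)>0$ on the support, the sign of the derivative is governed by $\skippay - \tfrac{1-\gDiscount\retDist(\valueFunc-\skippay)}{\gDiscount\retDens(\valueFunc-\skippay)}$, which is exactly the quantity that retention regularity forces to be monotone non-decreasing in $\skippay$. A boundary check at $\skippay=0$ (where the expression is $-\tfrac{1-\gDiscount\retDist(\valueFunc)}{\gDiscount\retDens(\valueFunc)}\leq 0$) combined with the behaviour as $\skippay\to\valueFunc$ gives existence of a crossing, and monotonicity gives uniqueness of the root $q$. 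Moreover, monotonicity implies the revenue is non-decreasing on $[0,q]$ and non-increasing on $[q,\infty)$, so $q$ is the unconstrained maximizer.

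Finally I would impose the feasibility constraint $\skippay_i \leq (1-\type_i)\valueFunc$. If $q \leq (1-\type_i)\valueFunc$, the unconstrained optimum lies in the feasible region and is optimal. If instead $q > (1-\type_i)\valueFunc$, the revenue is still non-decreasing throughout $[0,(1-\type_i)\valueFunc] \subseteq [0,q]$, so the constrained maximum is attained at the right endpoint $(1-\type_i)\valueFunc$. Combining the two cases yields $\skippay_i = \min\bigl((1-\type_i)\valueFunc,\, q\bigr)$, matching the proposition.

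The main obstacle is not the calculus but the bookkeeping around retention regularity: one must verify that the sign-of-derivative test cleanly reduces to the monotone quantity defined in \Cref{def:regularish}, and one must argue that a root $q$ exists in the relevant range (which requires checking the endpoint behaviour of $1-\gDiscount\retDist(\valueFunc-\skippay)$ so that the implicit equation is well-posed). Once these are in place, the remainder is a standard first-order argument with a capped constraint.
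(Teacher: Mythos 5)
Your proposal is correct and follows essentially the same route as the paper's proof: differentiate the known-types revenue from \Cref{prop:rev_known}, obtain the fixed-point equation from the first-order condition, use retention regularity for uniqueness of the root, and cap the price at the marginal value $(1-\type_i)\valueFunc$. You simply spell out the unimodality and endpoint case analysis that the paper leaves implicit, which is a fair (and slightly more careful) elaboration rather than a different argument.
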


\begin{proof}
    Retention regularity guarantees that there exists a unique maximizing solution.
    To find it, we take the derivative of the revenue expression above, 
    \begin{align*}
        \frac{\diff}{\diff\skippay}\REV(\valueFunc, \skippay, \type) &= \frac{1-\gDiscount\retDist(\valueFunc-\skippay) - \gDiscount\skippay\retDens(\valueFunc-\skippay)}{\left(1-\gDiscount\retDist(\valueFunc-\skippay)\right)^2} = 0 \\
        \skippay &= \frac{1-\gDiscount\retDist(\valueFunc-\skippay)}{\gDiscount\retDens(\valueFunc-\skippay)}.
    \end{align*}
    However, this calculation alone does not always find the optimal price: following such a prescription may set a price greater than a player's marginal value.
    As the designer has access to players' types, the seller can compare the price computed above with respect to the retention distribution and an agent's marginal value, setting the price to whichever is smaller.
\end{proof}

Notice that retention adds a constant upper bound on the price we should charge.
For players with high enough marginal value, charging a price that is too high may leave them with low utility, risking revenue from future tasks.
Conversely, if a player has low marginal value for a skip, full surplus extraction may still leave them with high enough utility to be retained.
We build on these observations for our simple pricing scheme later in this section.

Of course, designers are not restricted to offering only one skip per task, for example they could charge less if the player waits for half the task.
We show that setting multiple prices can generate more revenue in \Cref{appendix:complex_pricing}.
While even finding a closed form for such a pricing scheme is difficult, they are also upper bounded by the known types setting.

We note that without any distributional assumption on the marginal value distribution there is an unbounded revenue gap between known and unknown types, so we have no simple tool for bounding the performance of these complex pricing schemes.
\begin{theorem}\label{thm:unboundedgap}
For any constant $c$, there exists a retention distribution $\retDist$ and marginal value distribution $\margDist$ such that a game designer with known types achieves $c$ times more revenue than the revenue optimal price for unknown types.
\end{theorem}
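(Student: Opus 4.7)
The plan is to exhibit an equal-revenue-style marginal value distribution together with a retention distribution concentrated near $0$, so that any anonymous pricing policy is bounded per round while per-player discriminatory pricing scales with $\ln(1/\epsilon)$ for a vanishing parameter $\epsilon$. This mirrors the classical unbounded gap between discriminatory and anonymous posted pricing under heavy tails, with the retention piece chosen so that multi-round bookkeeping factors out evenly on both sides.

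For a parameter $\epsilon>0$ to be shrunk at the end, I'd take $\margDist$ to be the equal-revenue distribution on $[\epsilon,\valueFunc]$: $1-\margDist(x)=\epsilon/x$ on $[\epsilon,\valueFunc)$ with an atom of mass $\epsilon/\valueFunc$ at $\valueFunc$. Any single-round anonymous price $p\in[\epsilon,\valueFunc]$ then yields $p\cdot\epsilon/p=\epsilon$ in expectation, and prices outside this range are weakly dominated. Since each round contributes at most $\epsilon$ and the designer's revenue is a $\gDiscount$-discounted sum, the lifetime anonymous revenue per player is at most $\epsilon/(1-\gDiscount)$, independent of how prices vary across rounds and regardless of retention (which only lowers survival probabilities).

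For the known-types lower bound, I'd use the full-extraction policy $\skippay_i=v_{m,i}$, which by Proposition~\ref{prop:rev_known} contributes per-player lifetime revenue $v_{m,i}/\bigl(1-\gDiscount\retDist(\valueFunc-v_{m,i})\bigr)$. Its single-round population mean is $\Ex{v_m}=\int_{\epsilon}^{\valueFunc} x\cdot \epsilon/x^2\,\diff x+\valueFunc\cdot\epsilon/\valueFunc=\epsilon\bigl(\ln(\valueFunc/\epsilon)+1\bigr)$. Choosing $\retDist(x)=1-e^{-x/\delta}$ for a small $\delta>0$ keeps $\supp(\retDist)=[0,\infty)$ while forcing $\retDist(\valueFunc-v_{m,i})\to 1$ for any player with $v_{m,i}$ bounded away from $\valueFunc$, so each such denominator tends to $1-\gDiscount$. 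The resulting lower bound $\epsilon\bigl(\ln(\valueFunc/\epsilon)+O(1)\bigr)/(1-\gDiscount)$ divided by the anonymous upper bound $\epsilon/(1-\gDiscount)$ yields a ratio of $\ln(\valueFunc/\epsilon)+O(1)$, which exceeds $c$ once $\epsilon$ is chosen small enough.

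The main obstacle is the tail of very impatient players with $v_{m,i}$ near $\valueFunc$: their $\valueFunc-v_{m,i}$ may be too small for $\retDist(\valueFunc-v_{m,i})$ to approach $1$, so the matching denominators do not shrink to $1-\gDiscount$. This is benign because that tail carries probability only $O(\epsilon/\valueFunc)$ and contributes at most $O(\epsilon)$ to $\Ex{v_m}$, absorbed into the $O(1)$ term; if cleaner bookkeeping is desired, one can truncate $\margDist$ to $[\epsilon,\valueFunc/2]$ without changing the $\ln(1/\epsilon)$ blowup driving the gap. A secondary check is that the retention-regular price $q^*$ of Proposition~\ref{prop:pComplete} might strictly dominate full extraction on some players, but that would only raise known-types revenue further and thus is harmless for the lower bound.
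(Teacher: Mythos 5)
Your proposal is correct, and its engine is the same as the paper's: an equal-revenue marginal value distribution, for which any anonymous posted price earns a constant per round while known-types full extraction earns the logarithm of the support ratio. Where you differ is in how the multi-round/retention bookkeeping is neutralized. The paper makes retention entirely pricing-independent by putting $\retDist$'s mass on two points, $0$ and a threshold above the maximum achievable utility, so both settings face identical (coin-flip) retention dynamics, the discounted factors cancel exactly, and the gap is read off from a single round with value $e^c$ and $\margDist(x)=1-\nicefrac{1}{x}$. You instead keep retention live: you bound the anonymous side by $\epsilon$ per round uniformly (correctly noting that survival only shrinks the buying population, so conditioning on past retention cannot help), and you drive the known-types denominators $1-\gDiscount\retDist(\valueFunc-v_{m,i})$ down to $1-\gDiscount$ by taking an exponential retention distribution with small scale $\delta$, with the gap coming from shrinking the lower endpoint $\epsilon$ rather than growing the value. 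Your route costs a bit more care—the $\delta$-limit, the quantifier order (fix $\epsilon$ so the log term exceeds $c$, then shrink $\delta$), and the tail of players with $v_m$ near $\valueFunc$, which you correctly dispose of either by dropping their $O(\epsilon)$ contribution from the lower bound or by truncating at $\valueFunc/2$—but it buys something the paper's construction technically lacks: your $\retDist$ is continuous with $\supp(\retDist)=[0,\infty)$, as the model's retention definition requires, whereas the paper's two-point distribution does not satisfy that support condition. Your observation that Proposition~\ref{prop:pComplete}'s price could only improve the known-types side is also the right sanity check, since you only need full extraction as a lower bound.
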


\begin{proof}
    Let the value for the task be $e^c$. 
    We then define our retention distribution to take two possible values, $\retDens(0) = 0.5, \retDens(e^c+1)= 0.5$.
    Notice that such a characterization places retention completely outside of the power of the game designer. 
    Therefore, we have reduced our problem such that we need only consider the marginal value distribution to sell skips.
    We assume that the discount factors are distributed such that the marginal value distribution forms the \emph{equal-revenue distribution}, $\margDist(x) = 1- \nicefrac{1}{x}$.
    In the equal-revenue distribution setting any posted price generates revenue of $1$ but its expected value is only upper bounded by its support.
    With known types, the designer can set a price that obtains full surplus extraction equal to $\expect[x\sim \margDist]{x} = \ln{e^c} = c$. 
    However, without knowledge of a player's type the designer must set one fixed price ex-ante and for any fixed price the expected revenue of selling to an equal revenue player is $1$.
    Therefore, the best a game can achieve without knowledge of the discount factor is a $\nicefrac{1}{c}$ fraction of the optimal known types revenue.
\end{proof}

This theorem relies on reducing our setting to a well known setting in the literature known as the equal-revenue distribution \cite{hartline2009simple, hartline2013mechanism}. 
A natural question is whether there is a restricted class of distributions in which pricing schemes with unknown types can still perform competitively. 
Distributions with a monotone hazard rate (MHR) are one such example.

\subsection{Simple Skip Pricing}
The technical content unfolds in two steps: (1)~we bound the gap in revenue at a given round between the known types setting and a simple pricing scheme using a result from the literature on distributions with a monotone hazard rate (MHR); and (2)~we show our pricing scheme maintains at least as many players as the known types setting and always preserves MHR. 
Combining these two facts guarantees a simple pricing scheme can $\nicefrac{1}{e}$-approximate the optimal revenue.

We begin by defining revenue in the unknown types setting.
\begin{proposition}\label{prop:rev_unknown}
    The revenue generated with unknown types, from a vector of skip prices $\skippays$ and game designer discount $\gDiscount$ is
    $$
        \REV(\valueFunc, \skippays, \type) = \sum_{i=0}^{\infty} \gDiscount^i\skippays_i \margDistith{i}(\skippays_{i})
    $$
\end{proposition}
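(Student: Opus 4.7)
The plan is to derive the formula by unpacking the discounted sum over tasks, so the argument is essentially bookkeeping: identify the measure of players who are both present at task $i$ and willing to buy at the offered price $\skippays_i$, then sum the contributions across all tasks with the designer's discount $\gDiscount$. First I would make precise what $\margDistith{i}$ means in this repeated setting: starting from the prior type distribution $\typeDist$ (which induces a marginal-value distribution $\margDistith{0} = \margDist$), after task $j$ a player with type $\type$ stays in the game only if $\utilityFuncith{j}_{\max}(\valueFunc(\skippays_j),\skippays_j,\type) \geq \retention_j$, where $\retention_j \sim \retDist$ is drawn i.i.d. Taking expectations over $\retention_j$, the probability of surviving task $j$ given type $\type$ is $\retDist\bigl(\utilityFuncith{j}_{\max}(\valueFunc(\skippays_j),\skippays_j,\type)\bigr)$, and these survival events are independent across tasks by the i.i.d.\ draws of $\retention_j$. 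This lets me define $\margDistith{i}(\skippays_i)$ as the ex-ante probability that a player both survives tasks $0,\ldots,i-1$ and has marginal value at least $\skippays_i$ at task $i$.

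Next I would compute the contribution of task $i$ to total revenue. At task $i$, a surviving player purchases the skip iff their marginal value $(1-\type)\valueFunc$ weakly exceeds $\skippays_i$, producing revenue $\skippays_i$ in that event. Because the type is drawn once and held fixed, while the retention thresholds are drawn independently each task, conditional on a player's type the survival events and the purchase event at task $i$ are jointly measurable and their product expectation collapses into the single quantity $\margDistith{i}(\skippays_i)$ defined above. Hence the expected revenue extracted at task $i$ from a single player is $\skippays_i \margDistith{i}(\skippays_i)$. Applying the designer's geometric discount $\gDiscount^i$ across tasks and summing yields
\[
    \REV(\valueFunc, \skippays, \type) \;=\; \sum_{i=0}^{\infty} \gDiscount^i \skippays_i\, \margDistith{i}(\skippays_i),
\]
as claimed.

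The main subtlety, rather than an obstacle, is that $\margDistith{i}$ is not the marginal distribution of a fixed reference population but a recursively defined survival measure: its dependence on the full history $\skippays_0,\ldots,\skippays_{i-1}$ is what distinguishes the unknown-types setting from the tidy geometric series in Proposition~\ref{prop:rev_known}, where the constant price made the survival probability a single factor $\retDist(\valueFunc-\skippay)$ raised to the $i$-th power. Once $\margDistith{i}$ is defined to absorb this compounded history, the proposition reduces to linearity of expectation combined with Fubini's theorem to interchange the discounted sum and expectation over $\type$ and $\{\retention_j\}_{j<i}$.
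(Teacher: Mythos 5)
Your argument is correct and follows the same basic route as the paper---expected revenue is a discounted sum of per-task contributions, each equal to the price times the measure of players who are still present and willing to buy---but you handle the key object $\margDistith{i}$ differently. The paper leans entirely on \Cref{def:cond_marg}, which defines $\margDistith{\task}$ recursively as the marginal-value distribution conditioned on the realized (common) retention thresholds, and its short proof just notes that, for the surviving population, each task looks like the previous one before summing with discount $\gDiscount$. You instead define $\margDistith{i}(\skippays_i)$ ex ante, as the joint probability (averaged over the i.i.d.\ draws $\retention_j\sim\retDist$) that a player survives tasks $0,\dots,i-1$ and has marginal value at least $\skippays_i$; conditional on $\type$, survival factors as $\prod_{j<i}\retDist\bigl(\utilityFuncith{j}_{\max}\bigr)$ and the purchase event is deterministic, so linearity/Fubini gives the claimed sum. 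This buys an unambiguous expected-revenue statement: read literally, the paper's $\margDistith{i}$ is a normalized conditional distribution, so the displayed formula is per-surviving-player revenue and would need an extra factor for the probability of reaching task $i$---a normalization the paper's proof glosses over (likewise, the paper's $\margDist$ is literally a CDF, so the economically meaningful quantity is the complementary CDF; your ``marginal value at least $\skippays_i$'' reading is the sensible one). The cost is that your $\margDistith{i}$ is not the object of \Cref{def:cond_marg}: in the paper $\retention_\task$ is a single common draw per task, so the surviving population is random and the formula is stated in terms of that realized conditional distribution, whereas your averaging over per-player draws yields the same expected revenue by linearity but proves the ex-ante version of the statement rather than the one phrased through the paper's definition; you should flag that substitution explicitly.
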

\begin{proof}
    With unknown types, the optimal pricing scheme does not set a constant price across rounds.
    However, should the player remain in the game task $\task$ looks identical to task $\task - 1$.
    From \Cref{def:cond_marg} we have that the distribution of players that are remaining in some task $\task$ is defined by $\margDistith{\task}(\skippays_{\task})$.
    Thus, summing over all rounds with designer discount $\gDiscount$ gives the expected revenue generated by all players: $ \REV(\valueFunc, \skippays, \type) = \sum_{i=0}^{\infty} \gDiscount^i\skippays_i \margDistith{i}(\skippays_{i})$.
\end{proof}

We now define how retention impacts the marginal value distribution across tasks in the unknown types setting.
Note that at the start of the game, the distribution over players' marginal values is $\margDist(\skippay) = \Pr_{\type}[(1-\type)\valueFunc \leq \skippay]$.
\begin{definition}\label{def:cond_marg}
    A marginal value distribution $\margDistith{\task}$ at task $\task > 0$ if a retention threshold of $\retention_{\task-1}$ was drawn at the end of task $\task-1$ is defined as:
    $ \margDistith{\task}(x) \coloneqq \margDistith{\task-1}(x \mid \utilityFunc_{\max}(\valueFunc(\skippay),\skippay_{\task-1},\type) \geq \retention_{\task-1})$, where $\margDistith{0}(x) \coloneqq \margDist(x).$
\end{definition}

We formalize the observation of ``threshold'' pricing from the known types setting for the unknown types setting.

\begin{definition}[Myerson Threshold (MT) Pricing]
First, the designer computes
$$
    \skipret = \frac{1-\gDiscount\retDist(\valueFunc-\skippay)}{\gDiscount\retDens(\valueFunc-\skippay)}.
$$
Then at each task $\task$, they compute the Myerson optimal posted price with respect to the marginal value distribution at that task:
$$
    \skippay_{\task}^{\myer} = \frac{1-\margDistith{\task}(x)}{\margDenseith{\task}(x)}.
$$
Finally, for task $\task$, the designer charges $\skippay_{\task} = \min\left(\skipret,\skippay_{\task}^{\myer}\right)$.
\end{definition}
\noindent Note that the $\skipret$ computation is the same as in Proposition \ref{prop:pComplete}.
We can think of this as the retention distribution setting a constant upper bound on the price across all tasks. 

We leverage the following result to lower-bound the revenue obtained by the Myerson optimal price.
\begin{lemma}\label{lma:MHR_lit}
    \cite[Theorem 4.37]{hartline2013mechanism} For any marginal value distribution $\margDistith{\task}$ satisfying the MHR property, the expected value is at most $e$ times the expected revenue generated in one round by setting the optimal posted price.
\end{lemma}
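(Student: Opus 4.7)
The plan is to bound the expected value of $\margDistith{\task}$ via the layer-cake identity $\expect{v} = \int_0^\infty \bigl(1-\margDistith{\task}(v)\bigr)\,\diff v$ and split the integral at the monopoly-optimal posted price $p^{\star} = \argmax_p p\bigl(1-\margDistith{\task}(p)\bigr)$. The MHR assumption will let me dominate the integrand on each side of $p^{\star}$ by a single exponential whose rate is pinned down by the first-order condition at $p^{\star}$.

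First I would record the MHR consequences I need: the hazard rate $h(v) = \margDenseith{\task}(v)/\bigl(1-\margDistith{\task}(v)\bigr)$ is non-decreasing, and the first-order condition at $p^{\star}$ gives $h(p^{\star}) = 1/p^{\star}$. Using the representation $1-\margDistith{\task}(v) = \exp\bigl(-\int_0^v h(u)\,\diff u\bigr)$, monotonicity of $h$ on either side of $p^{\star}$ yields the one-sided bound $1-\margDistith{\task}(v) \leq \bigl(1-\margDistith{\task}(p^{\star})\bigr)\cdot \exp\bigl(-(v-p^{\star})/p^{\star}\bigr)$ for $v \geq p^{\star}$, and $1-\margDistith{\task}(v) \leq \bigl(1-\margDistith{\task}(p^{\star})\bigr)\cdot \exp\bigl((p^{\star}-v)/p^{\star}\bigr)$ for $v \leq p^{\star}$.

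Integrating these two exponential bounds produces a tail contribution of at most $p^{\star}\bigl(1-\margDistith{\task}(p^{\star})\bigr)$ and a head contribution of at most $(e-1)\,p^{\star}\bigl(1-\margDistith{\task}(p^{\star})\bigr)$. Their sum is $e\cdot p^{\star}\bigl(1-\margDistith{\task}(p^{\star})\bigr)$, which is exactly $e$ times the revenue from the optimal posted price, giving the lemma.

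The main care I anticipate is in justifying the first-order characterization of $p^{\star}$ in boundary cases (unbounded support, mass at $0$, or $p^{\star}=\infty$): MHR implies regularity of the virtual value, so the revenue curve is quasi-concave and either $p^{\star}$ is the unique interior solution to $h(p)=1/p$ or the bound reduces to a trivial boundary statement. The constant $e$ is known to be tight, attained in the limit by exponential distributions, for which both exponential dominations above hold with equality.
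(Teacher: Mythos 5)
Your proof is correct. Note, though, that the paper itself does not prove this statement at all: it is imported verbatim as Theorem 4.37 of Hartline's \emph{Mechanism Design and Approximation}, so there is no in-paper argument to compare against. What you have written is a sound, self-contained derivation of that cited result, and it is essentially the standard one: write $\mathbf{E}[v]=\int_0^\infty(1-F(v))\,\mathrm{d}v$, use the first-order condition $h(p^{\star})=1/p^{\star}$ at the monopoly price together with monotonicity of the hazard rate and the representation $1-F(v)=\exp\bigl(-\int_0^v h(u)\,\mathrm{d}u\bigr)$ to dominate the survival function by $(1-F(p^{\star}))\,e^{-(v-p^{\star})/p^{\star}}$ on both sides of $p^{\star}$, and integrate to get head and tail contributions of $(e-1)\,p^{\star}(1-F(p^{\star}))$ and $p^{\star}(1-F(p^{\star}))$, summing to $e$ times the optimal posted-price revenue. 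Your two one-sided dominations, the integration constants, and the tightness claim (equality throughout for exponential distributions, where $h$ is constant) all check out, and your handling of the boundary cases is adequate: MHR forces the tail to decay at least exponentially past any point with positive hazard, so the mean is finite and, outside degenerate cases, $p\,h(p)$ crosses $1$ exactly once, making the interior first-order characterization of $p^{\star}$ legitimate. The only thing your write-up buys beyond the paper is self-containedness; conversely, citing the textbook result, as the paper does, is the lighter-weight choice since the lemma is used purely as a black box in the proof of Theorem 5.10.
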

 
Finally, we show that as long as the marginal value distribution satisfies MHR at the first task, threshold pricing achieves a $\nicefrac{1}{e}$-approximation of the optimal revenue.

\begin{theorem}\label{thm:approx}
For every retention distribution $\retDist$ satisfying retention regularity and marginal value distribution $\margDist$ satisfying MHR, Myerson Threshold (MT) pricing generates a $\nicefrac{1}{e}$ fraction of the revenue in the known types setting.
\end{theorem}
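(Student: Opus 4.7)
I follow the two-step outline signposted by the authors: bound the per-task revenue ratio using \Cref{lma:MHR_lit}, then control how the remaining population evolves so the bound iterates across tasks.

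\textbf{Per-task revenue gap.} Fix a task $\task$ at which the current marginal value distribution $\margDistith{\task}$ satisfies MHR. The known-types per-player revenue at this task is $\Ex[\margValue\sim\margDistith{\task}]{\min(\margValue,\skipret)}$, while MT charges $\skippay_{\task}=\min(\skipret,\skippay_{\task}^{\myer})$ and earns $\skippay_{\task}(1-\margDistith{\task}(\skippay_{\task}))$. If $\skippay_{\task}^{\myer}\le\skipret$, MT uses the Myerson-optimal price and \Cref{lma:MHR_lit} immediately gives $\skippay_{\task}^{\myer}(1-\margDistith{\task}(\skippay_{\task}^{\myer}))\ge \Ex{\margValue}/e\ge \Ex{\min(\margValue,\skipret)}/e$. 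If $\skippay_{\task}^{\myer}>\skipret$, MT uses $\skipret$; here I apply the MHR bound to the truncated variable $\tilde v=\min(\margValue,\skipret)$. Since $\skippay(1-\margDistith{\task}(\skippay))$ is increasing on $[0,\skippay_{\task}^{\myer}]$ by MHR unimodality, the revenue-maximizing posted price on $\tilde v$ is $\skipret$; the log-survival of $\tilde v$ remains concave (it is concave on $[0,\skipret)$ and drops to $-\infty$ at the point mass), so the Hartline argument behind \Cref{lma:MHR_lit} still delivers $\Ex{\tilde v}\le e\,\skipret(1-\margDistith{\task}(\skipret))$, which is exactly the needed bound.

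\textbf{Population comparison and iteration.} I couple the MT and KT processes using the same draws of $\type$ and each $\retention_{\task}$. Since $\skippay_{\task}\le\skipret$, inspecting the three cases $\margValue<\skippay_{\task}$, $\skippay_{\task}\le\margValue<\skipret$, and $\margValue\ge\skipret$ shows that $\utilityFunc_{\max}$ under MT weakly exceeds $\utilityFunc_{\max}$ under KT for every $\type$, so every KT-retained player is also MT-retained and the unnormalized mass measure of $\margValue$ under MT dominates the one under KT at every task. To iterate Step~1 I also need MHR of $\margDistith{\task,MT}$ preserved round over round: the MT posterior density is $\margDense(\margValue)\,\retDist(\valueFunc-\min(\margValue,\skippay_{\task}))$, i.e., the original MHR density re-weighted by a non-increasing function of $\margValue$; provided $\retDist$ is log-concave (which I expect to extract from retention regularity), multiplying an MHR density by a non-increasing log-concave weight preserves MHR. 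Combining the two steps by induction, $R_\task^{MT}\ge R_\task^{KT}/e$ for every $\task$, and multiplying by $\gDiscount^\task$ and summing yields $\REV^{MT}\ge\REV^{KT}/e$.

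\textbf{Main obstacle.} The delicate step is MHR preservation under the retention-conditioned posterior, which ultimately rests on extracting log-concavity of $\retDist$ from retention regularity. A useful fallback covers the hardest sub-case of Step~1: when $\skippay_\task^{\myer}>\skipret$ the MT and KT retention functions in fact \emph{coincide} on every $\margValue$ (both equal $\retDist(\valueFunc-\min(\margValue,\skipret))$), so the two processes induce identical populations at all future rounds and only the already-established single-task bound is needed.
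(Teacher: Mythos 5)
Your overall architecture (per-task $\nicefrac{1}{e}$ bound via \Cref{lma:MHR_lit}, a coupling showing MT retains a superset of the known-types population, and an induction that needs MHR preserved round over round) matches the paper's proof, and your case split on $\skippay_{\task}^{\myer}$ versus $\skipret$ is the same. However, your inductive step has a genuine gap, and it comes from misreading the retention model. In the model, $\retention_{\task}$ is a \emph{single} threshold realized at the end of task $\task$ and shared by all players, and \Cref{def:cond_marg} conditions on that common realization. Your claim that the surviving population has density $\margDense(\margValue)\,\retDist(\valueFunc-\min(\margValue,\skippay_{\task}))$ describes independent per-player retention draws (the variant only used in the simulation appendix), and it forces you to conjecture that retention regularity yields log-concavity of $\retDist$ and that multiplying an MHR density by a non-increasing log-concave weight preserves MHR --- neither of which is established (retention regularity is a Myerson-regularity-type condition on $x-\frac{1-\gDiscount\retDist(\valueFunc-x)}{\gDiscount\retDens(\valueFunc-x)}$ and does not imply log-concavity of $\retDist$; and MHR densities need not be log-concave, so the product argument is not obviously available). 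Your fallback only covers the sub-case $\skippay_{\task}^{\myer}>\skipret$, where MT and KT populations coincide; in the sub-case $\skippay_{\task}^{\myer}\le\skipret$ MT retains strictly more players, and you still need MHR of the MT population at later tasks, which your argument leaves open. The fix is the paper's observation: since $\utilityFunc_{\max}(\valueFunc,\skippay_{\task},\type)=\max(\type\valueFunc,\valueFunc-\skippay_{\task})$ is weakly increasing in $\type$, conditioning on the common realized threshold keeps exactly the players with $\type\ge\type^{*}$, so $\margDistith{\task+1}$ is just $\margDistith{\task}$ right-truncated at $(1-\type^{*})\valueFunc$, and truncation preserves MHR (\Cref{lma:truncMHR}); no assumption on $\retDist$ beyond retention regularity is needed.

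Two smaller points. First, in the case $\skippay_{\task}^{\myer}>\skipret$ your appeal to ``the Hartline argument'' applied to the capped variable $\tilde v=\min(\margValue,\skipret)$ is informal; the clean route (and the paper's) is that the Myerson price on an MHR distribution sells with probability at least $\nicefrac{1}{e}$ (Lemma 4.1 of \cite{hartline2008optimal}), so charging $\skipret\le\skippay_{\task}^{\myer}$ also sells with probability at least $\nicefrac{1}{e}$, giving revenue at least $\nicefrac{\skipret}{e}\ge\Ex{\min(\margValue,\skipret)}/e$. Second, your use of the tighter known-types benchmark $\Ex{\min(\margValue,\skipret)}$ and the unnormalized-mass domination argument are fine and consistent with the paper's reasoning; once the truncation-based MHR preservation replaces your reweighting step, the induction closes exactly as you intend.
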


\begin{proof}
    This proof has two main steps.
    We first show that for any marginal value distribution $\margDist$, MT pricing achieves a $\nicefrac{1}{e}$ approximation of the revenue of the known types setting in the first task and retains at least the same set of players. 
    We then show that MHR is preserved on the marginal value distributions at each task and therefore, inductively we achieve a $\frac{1}{e}$ fraction of the total revenue.
    Key to our proof strategy is that in the known types setting, the retention probability of any player that is charged their marginal value is the same had they not bought a skip: $\valueFunc - \margValue = \type\valueFunc$. 
    
    First, we look at the case where $\skippay^{\myer}_{1}\leq \skipret$ i.e., when the Myerson optimal price at task 1 is lower than the retention threshold price. 
    Following our prescription, MT pricing sets $\skippay=\skippay^{\myer}_{1}$.
    A trivial upper bound on the revenue of the known types setting at task $1$ is $\Ex[\type\sim \typeDist]{\margValue}$ and by \Cref{lma:MHR_lit} we know that the Myerson optimal price achieves a $\nicefrac{1}{e}$ portion of this revenue. 
    Furthermore, the utility obtained by each player with type $\type$ in the known type setting is $\max(\type\valueFunc, \valueFunc - \skipret)$ and under MT pricing they receive $\max(\type\valueFunc, \valueFunc - \skippay^{\myer})$.
    Given $\skippay^{\myer}\leq \skipret$ we have that $\forall \type$, $\max(\type\valueFunc, \valueFunc - \skippay^{\myer}) \geq \max(\type\valueFunc, \valueFunc - \skipret)$. 
    Therefore, every player who is retained in the known types setting is also retained by setting the Myerson optimal price. 
    
    The other case is when $\skipret\leq \skippay^{\myer}_{1}$. 
    In this case, MT pricing gives all players the same utility as they would receive in the known types case and so trivially their retention probabilities are the same. 
    What is left to show is that setting a price of $\skipret$ gets at least a $\nicefrac{1}{e}$ approximation of the revenue. 
    Observe that an upper bound on the revenue of the known types setting is $\skipret$ as that is the largest price it ever sets. 
    All we need is that the probability of sale of charging $\skipret$ is at least $\nicefrac{1}{e}$. 
    Here, we leverage another result from the literature that says the Myerson optimal price always achieves probability of sale at least $\nicefrac{1}{e}$ on MHR distributions (Lemma 4.1 in \cite{hartline2008optimal}).
    Therefore, since $\skipret\leq \skippay^{\myer}_{1}$ it must have probability of sale at least $\nicefrac{1}{e}$ and revenue at at least $\nicefrac{\skipret}{e}$.
    
    Our inductive step simply needs to show that MHR is maintained for all subsequent marginal value distributions $\margDistith{\task>1}$.
    Observe that for any fixed price $\skippay_{\task}$ at task $\task$, $\utilityFuncith{\task}_{\max}(\valueFunc(\skippay_{\task}),\skippay_{\task},\type) = \max(\type\valueFunc(\skippay_{\task}), \valueFunc(\skippay_{\task})-\skippay_{\task})$ is weakly increasing in $\type$. 
    Therefore, for any realization of $\retention_{\task}$ there is some $\type^*$ such that all players with $\type< \type^*$ leave the game and all players with types $\type\geq\type^*$ are retained. 
    This means that $\margDistith{\task+1}$ is simply $\margDistith{\task}$ right truncated at the point $(1-\type^*)\valueFunc$ and by \Cref{lma:truncMHR} MHR will be preserved.
\end{proof}

Note that for any discount factor, marginal values decrease as the task goes on. 
Since in the known-type case the designer can guarantee a sale, they will always sell in the first round. 
This means the upper bound used in the proof of \Cref{thm:approx} applies equally to the multiple prices setting.
\begin{corollary}\label{cor:single_multi_bound}
For every retention distribution $\retDist$ satisfying retention regularity and marginal value distribution $\margDist$ satisfying MHR, Myerson Threshold (MT) pricing generates a $\nicefrac{1}{e}$ fraction of the revenue of any complex pricing scheme.
\end{corollary}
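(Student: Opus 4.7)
\textbf{Proof proposal for Corollary \ref{cor:single_multi_bound}.} The plan is to reuse the analysis already carried out in \Cref{thm:approx} almost verbatim, but to observe that the upper bounds on revenue invoked there hold against \emph{any} complex pricing scheme, not just single prices per task. Concretely, \Cref{thm:approx} bounds the optimal known-types revenue per round by one of two quantities: the expected marginal value $\Ex[\type\sim\typeDist]{\margValue}$ when $\skippay^{\myer}_{1}\le \skipret$, and the threshold price $\skipret$ when $\skipret \le \skippay^{\myer}_{1}$. I would argue each of these remains a valid upper bound in the more permissive setting where the designer may post a whole menu of prices within a single task.

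First I would establish that the known-types revenue under a complex pricing scheme equals the known-types revenue in the single-price-per-task model. The key observation---already flagged in the paragraph preceding the corollary---is that within a task marginal values weakly decrease as time progresses (the undiscounted marginal value $(1-\type)\valueFunc$ is maximized at the task's start), so an informed designer can capture the entire per-task surplus by selling immediately at $\min((1-\type_i)\valueFunc,\skipret)$. Offering additional, later prices cannot increase per-task extraction beyond the player's initial marginal value, nor can it improve retention (since retention is determined by the task's final utility, which is maximized by selling earliest). Hence known-types revenue is unchanged whether the designer uses one or many prices per task.

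Next, I would invoke a standard dominance argument: any anonymous complex pricing scheme with unknown types can be simulated by a designer who knows types (the known-types designer can always copy the anonymous price schedule), so the optimal complex-pricing unknown-types revenue is upper bounded by the optimal known-types revenue. Combining with the previous paragraph, it is also upper bounded by the single-price known-types revenue, which in turn---exactly as in the proof of \Cref{thm:approx}---is bounded per round by $\min(\Ex[\type\sim\typeDist]{\margValue},\skipret)$ and across rounds by the corresponding geometric series in $\gDiscount\retDist(\valueFunc-\skippay)$. The MT pricing scheme obtains at least $\nicefrac{1}{e}$ of this bound in each round by \Cref{lma:MHR_lit} and the $\nicefrac{1}{e}$ probability-of-sale guarantee for MHR distributions, and \Cref{lma:truncMHR} propagates MHR across rounds so that the retention argument in the proof of \Cref{thm:approx} still guarantees weakly more players survive under MT pricing than under the known-types benchmark.

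The only real obstacle I anticipate is the first step: justifying that complex prices within a task do not increase known-types revenue. A careful argument has to rule out schemes that, say, offer a cheap partial skip followed by a full-skip price, which could in principle affect the trajectory of utility realizations and retention. I would handle this by showing that for any such within-task schedule the total payment collected from player $i$ is at most $(1-\type_i)\valueFunc$ (by individual rationality at each decision point) and the residual utility going into the retention check is at least $\type_i\valueFunc$ (since the player can always choose never to purchase a partial skip), so selling the whole skip at the outset at price $\min((1-\type_i)\valueFunc,\skipret)$ weakly dominates on both revenue and retention grounds. Once this reduction is in hand, the rest of the corollary is essentially a restatement of \Cref{thm:approx}.
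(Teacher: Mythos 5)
Your proposal is correct and takes essentially the same route as the paper: the paper's own (one-line) proof likewise reduces to \Cref{thm:approx} by noting that marginal values decrease as the task goes on, so the known-types designer always sells the full skip at the first opportunity, the known-types benchmark is therefore unchanged when multiple prices are allowed, and the upper bounds from \Cref{thm:approx} carry over (with the known-types-dominates-unknown-types simulation step left implicit). One small caution: your closing claim that the outset price $\min((1-\type_i)\valueFunc,\skipret)$ ``weakly dominates on both revenue and retention grounds'' is too strong as a pointwise statement (a complex scheme can retain more by charging less, or collect more than $\skipret$ in a round); the precise version, which your second paragraph already contains, is that for any fixed total per-task payment $P\leq(1-\type_i)\valueFunc$ selling at the outset yields weakly higher end-of-task utility, so the optimal single price earns at least as much total revenue as any complex scheme.
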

\begin{proof}
This corollary follows from the fact that the optimal known type mechanism is the same in the problem with a single skip price and where multiple prices can be set. 
To see this note that for any discount factor, marginal values decrease as the task goes on. 
Since in the known type case the designer can guarantee a sale they will always sell in the first round. 
This means the upper bound used in the proof of \Cref{thm:approx} applies equally to the multiple prices setting.
\end{proof}

\subsection{Simulations}
In this section, we simulate populations of players to study what features of the problem make MT pricing perform well. 
We also test the robustness of MT pricing by adding new ingredients that break away from some of our previous assumptions.
Namely, we allow the population to grow from newly acquired players and players to be heterogeneous with respect to retention draws. 
Full details of the parameters used in the simulations and the formal descriptions of these new ingredients can be found in \Cref{appendix:simparams}.
We note that even when adding these new ingredients, MT pricing gets at least a $\nicefrac{1}{e}$ fraction of the optimal known types' revenue.
Therefore, we compare MT pricing's performance to other simple pricing schemes to learn what features of the problem makes each option perform well. 
Histograms of the relative performance of MT pricing and the next best alternative for each of the following sections can be found in \Cref{appendix:images}. 

\subsubsection{Comparison to Myerson and Retention Threshold Pricing}
We begin by comparing two alternative pricing strategies: at each round, charge only the Myerson optimal price or the retention threshold price. 
The results (see \Cref{fig:alt_pricing_revs}) suggest that the lower price often achieves the higher revenue. 
Setting a retention threshold price achieves higher revenue in over 20\% of the instances and most of these instances are when Myerson takes a risk in setting a high price and loses too many players.

This matches the advice we see from the gaming community.
Rather than losing too many players trying to make a quick buck, MT pricing keeps enough players around to monetize in the next task. 
In fact, the revenue generated from MT pricing is within $1$\% of the revenue from the max of the two other pricing schemes in $82$\% of cases and strictly outperforms the next best option in $12$\% of cases. 
 
\subsubsection{Viral Population Growth}
The gap in performance between MT pricing and the alternatives is small in most settings.
This should not be surprising; MT charges the minimum of Myerson and the retention threshold and in many cases this results in exactly the same strategy as one of them, including retaining the same individuals.

\subsubsection{Independently Drawn Retention}
Once again, the strategy that sets the lower price often generates higher average revenue. 
However, Myerson pricing now consistently loses players when it gambles on a high price, eliminating any extra gains it could have achieved. 

\subsubsection{Lowering Payments}
The main finding of our simulations thus far is that lower prices are usually better.
One might wonder if the MT pricing algorithm could be improved by charging $\skippay_{\task} = \min\left(\skipret,c\skippay_{\task}^{\myer}\right)$ where $c$ is a constant less than one.
We re-examine a subset of previous settings that showed evidence of lower prices being performant and find that decreasing the scaling factor can be relatively safe for large enough values of $c$.
However, for values lower than $\nicefrac{1}{2}$, the variance in the performance not only increases but average performance decreases. 

\section{Conclusion} \label{sec:future}
Mobile games give rise to various phenomena that are difficult or impossible to explain with standard economic models. 
This paper aims to explain one such phenomenon: the paradoxical relationship between utility and revenue when selling skips.
We find that revenue-optimal pricing requires carefully balancing player utility within a task and across tasks; pricing not only needs to be high for players to have any value in game-playing but low enough to monetize them long-term.
We believe that we have only scratched the surface of a rich economic domain; there are many other phenomena arising in mobile games (and beyond) upon which our behavioural model could shed light. 
Here are a few examples.

\subsection{Battle passes}
With the success of skips, grind-based monetization has bled into all categories of gaming.
Battle passes are a noteworthy example; most of the top grossing console games have battle passes.
They are essentially play-time timers that reward the players at regular intervals with cosmetic items that have no affect on gameplay. 
Similarly to mobile games, players can pay money to skip past these timers. 

While the inner workings of a battle pass fit squarely within our model, interestingly, most battle passes require players to pay an entrance fee to participate in the grind. 
This cost is usually much lower then the cost to purchase cosmetics outright.
For example, in 2020 the price of a battle pass in Clash of Clans was $\$4.99$ USD, around $25$ times cheaper than other progression content \cite{joas_2020}. 
Cheap entrance fees are low risk ways of monetizing low value players as they still need to grind to complete tasks and earn rewards; an item's value is still anchored by the skip price for that task.
This adds another dimension to the problem where the game designer must determine how much to charge to enter knowing that if a player buys they will potentially provide more revenue and are retained longer by purchasing skips \cite{joas_2020}.

Furthermore, players who purchase often have higher retention from the value the grind provides \cite{joas_2020}.

\subsection{Changing Values}
While we allow players to differ in impatience, one simplification we made is that they all have the same underlying value for tasks. 
Such an assumption focuses on values derived primarily through effort.
When completion also rewards players with cosmetic items, players' preferences need not be aligned. 
Characterizing a player's aesthetic preferences by a second type would enable our model to explain players having changing preferences over skipping and waiting across tasks. 

In addition to each player having different values for items there is evidence to suggest these values may evolve overtime.
We see in empirical data that purchase intention can depend on the total time a player has spent in the game \cite{gdclongterm2013}.
This might motivate a player value function that grows over the course of the game.
Evolving value would further justify the safe approach to revenue maximization in \Cref{sec:rev}; players are allowed to play without spending money until their value grows to the point at which they are motivated to pay. 

\section*{Acknowledgements}
This work was funded by an NSERC Discovery Grant, a DND/NSERC Discovery Grant Supplement, a CIFAR Canada AI Research Chair (Alberta Machine Intelligence Institute), awards from Facebook Research and Amazon Research, and DARPA award FA8750-19-2-0222, CFDA \#12.910 (Air Force Research Laboratory).

\bibliographystyle{ACM-Reference-Format}
\bibliography{ref}

\onecolumn
\appendix
\section{Utility Definitions for Fully-Sensitive and Insensitive Players}\label{app:util}

\subsection{Fully-Sensitive Players}
\begin{definition}
At the beginning of a task, the marginal value a fully-sensitive player with type $\type$ has for a skip at price $\skippay$ is
$$
    \valueFunc_m(\typeDist, \skippay) = (1-\type)\valueFunc(\typeDist, \skippay).
$$
The utility a fully-sensitive player receives if they purchase a skip at price $\skippay$ is
$$
    \utilityFunc(\valueFunc(\typeDist, \skippay), \skippay) = \valueFunc(\typeDist, \skippay) - \skippay.
$$
We assume players are utility maximizers. 
Therefore, the utility a fully-sensitive player receives in a task is
$$
    \utilityFunc_{\max}(\valueFunc(\typeDist, \skippay),\skippay,\type) = \max(\type\valueFunc(\typeDist, \skippay), \valueFunc(\typeDist, \skippay) - \skippay).
$$
\end{definition}

\begin{definition}
At the end of each task $\task \in [1, \infty)$, all players realize a retention threshold $\retention_{\task} \sim \retDist$, where $\supp(\retDist) = [0,\infty)$.
A fully-sensitive player quits the game immediately if
$$
    \utilityFuncith{\task}_{\max}(\valueFunc(\typeDist, \skippay_{\task}),\skippay_{\task},\type)<\retention_{\task},
$$
where $\utilityFuncith{\task}_{\max}(\valueFunc(\typeDist, \skippay_{\task}),\skippay_{\task},\type) = \max(\type\valueFunc(\typeDist, \skippay_{\task}), \valueFunc(\typeDist, \skippay_{\task}) - \skippay_{\task})$.
\end{definition}

\subsection{Insensitive Players}
\begin{definition}
At the beginning of a task, the marginal value an insensitive player with type $\type$ has for a skip at price $\skippay$ is
$$
    \valueFunc_m(\skippay) = (1-\type)\valueFunc_c(\skippay).
$$
The utility an insensitive player receives if they purchase a skip at price $\skippay$ is
$$
    \utilityFunc(\valueFunc_c(\skippay), \skippay) = \valueFunc_c(\skippay) - \skippay.
$$
We assume players are utility maximizers. 
Therefore, the utility an insensitive player receives in a task is
$$
    \utilityFunc_{\max}(\valueFunc_c(\skippay),\skippay,\type) = \max(\type\valueFunc_c(\skippay), \valueFunc_c(\skippay) - \skippay).
$$
\end{definition}

\begin{definition}
At the end of each task $\task \in [1, \infty)$, all players realize a retention threshold $\retention_{\task} \sim \retDist$, where $\supp(\retDist) = [0,\infty)$.
An insensitive player quits the game immediately if
$$
    \utilityFuncith{\task}_{\max}(\valueFunc_c(\skippay_{\task}),\skippay_{\task},\type)<\retention_{\task},
$$
where $\utilityFuncith{\task}_{\max}(\valueFunc_c(\skippay_{\task}),\skippay_{\task},\type) = \max(\type\valueFunc_c(\skippay_{\task}), \valueFunc_c(\skippay_{\task}) - \skippay_{\task})$.
\end{definition}

\section{Transformation and Properties of an Insensitive Value Function}
We now give a way to choose a constant value function that well approximates a value function as long as it plateaus enough. 
While the specifics of how this constant value function is chosen are not used in our analysis, we give an example that maintains some nice properties. 
Fix the value of the task at the point $\payment^*$ where $\valueFunc'(\payment^*)=1$ and constrain the designer to set prices $\skippay\geq \skippay^*$.
The constraint is there to avoid the failure case of placing the price on the steep region of the curve, a region which is not well modeled by a flat value function and often gives low utility.
This transformation is done for analytic convenience but it also has a few nice properties: the utility optimal price is guaranteed to be above the truncation point and the utility obtained by buyers under the true value function will be at least as high as the utility on the flat function. 
These facts combine to ensure that such a procedure can only be pessimistic about buyers' utility and will always have retention at least as high when mapped back to the true value functions. 
Furthermore, the revenue of the optimal price on the constant value function is at least a $\nicefrac{\valueFunc_{\mathrm{start}}}{\valueFunc_{\mathrm{end}}}$ fraction of the revenue on the true value function where $\valueFunc_{\mathrm{end}}$ is the value at the end of the low-sensitivity region and  $\valueFunc_{\mathrm{start}}$ the value at the beginning of the low-sensitivity region. 
Therefore, as long as this region is quite flat little is lost by comparing to the optimal point on the constant value function.

\subsection{Transformation Properties}

Here we show why the properties of the transformation listed above are true. 
We introduce some notation that will be useful:
Define $\valueFunc_{\mathrm{end}}$ as the value where $\valueFunc$ takes its maximum value (i.e., where $\valueFunc(\skippay)=\skippay$) and $\valueFunc_{\mathrm{start}}$ as the value where $\valueFunc'(\skippay^*)=1$. 
Define the constant value function $\valueFunc_{c}(\skippay) =\valueFunc_{\mathrm{start}}$.

\subsubsection{The utility optimal point is always above the truncation point} This follows from the fact that the truncation point is at $\valueFunc'(\skippay^*)=1$ and the concavity of $\valueFunc(\skippay)$. 
By concavity, for all $x<\skippay^*$ we must have $\valueFunc'(x)\geq 1$. 
This means utility is increasing for every player whether they purchase or not and so trivially the utility optimal price is higher.

\subsubsection{The utility obtained by buyers under the true value function will be at least as high as the expected utility on the flat function} 
The transformation replaces the true value function with a constant at the lowest price we might charge.
Due to monotonicity of the true value function, we have that for any price we might charge $\valueFunc(\skippay) \geq \valueFunc_c(\skippay)$.
Therefore, $\valueFunc(\skippay) - \skippay \geq \valueFunc_c(\skippay) - \skippay$ for any price $\skippay$.

\subsubsection{The gap in optimal revenue between the constant value function and the true value function.}
First note that the upper bound on the probability of sale is the same for both $\valueFunc(\skippay)$ and $\valueFunc_{c}(\skippay)$; the two functions are equal at the point where $\valueFunc'(\skippay^*)=1$ and probability of sale decreases above this point for both value functions. 
Further, the revenue of the true value function at any price $\skippay$ is upper bounded by $$\min\left(\typeDist\left(1-\frac{\skippay}{\valueFunc_{\mathrm{end}}}\right), ~\typeDist\left(1-\frac{\skippay^*}{\valueFunc_{\mathrm{start}}}\right)\right)\skippay.$$
This is because $$\min\left(\typeDist\left(1-\frac{\skippay}{\valueFunc_{\mathrm{end}}}\right),~\typeDist\left(1-\frac{\skippay^*}{\valueFunc_{\mathrm{start}}}\right)\right)>\typeDist\left(1-\frac{\skippay}{\valueFunc(\skippay)}\right).$$
Therefore, for any price $\skippay$ on the original value function we can set a price of $$\max\left(\skippay^*,~\frac{\valueFunc_{\mathrm{start}}}{\valueFunc_{\mathrm{end}}}\skippay\right)$$ on the constant value function and achieve at least a $\nicefrac{\valueFunc_{\mathrm{start}}}{\valueFunc_{\mathrm{end}}}$ fraction of the revenue per round while providing players with the same utility and therefore, the same retention. Finally, since there exists a price on the constant value function that provides a $\nicefrac{\valueFunc_{\mathrm{start}}}{\valueFunc_{\mathrm{end}}}$ approximation to the optimal revenue on the true value function, the optimal price on the constant value function must provide at least a $\nicefrac{\valueFunc_{\mathrm{start}}}{\valueFunc_{\mathrm{end}}}$ approximation to the optimal revenue on the true value function.

\subsubsection{Dropping the constraint does not impact analysis}\label{appendix:cons_value}
When face with a minimum payment constraint the Myerson optimal payment on an MHR distribution is simply the max of the constraint and the Myerson optimal payment. Similarly the known types designer cannot sell to any of the players whos marginal utility is below $v-\skippay^*$ and must choose a retention threshold $\skipret\geq\skippay^*$. It is easy to see the main theorem follows by the same steps as before by simply comparing the relative performance of known types and MT pricing on the remaining portion of the population which could ever be sold to.  

\section{Complex Pricing Schemes}\label{appendix:complex_pricing}
We now show what happens if a designer was able to set multiple prices for players based on how much of the task they complete, for example charging less if the player waits for half the task.
We can show that in a given round such a pricing scheme can generate more revenue.

\begin{theorem}\label{thm:multivssingle}
There exists type distributions $\typeDist$ such that the revenue achieved by selling multiple skips is strictly greater than the optimal single-skip revenue.
\end{theorem}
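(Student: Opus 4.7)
The plan is to exhibit a concrete two-point type distribution on which no single take-it-or-leave-it skip price can match the revenue of a two-option menu. The construction should mirror a textbook second-degree price-discrimination example, with the ``quality'' dimension played by how much of the task the player must endure before paying.

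First I would fix the per-task value at $\valueFunc$ and place mass $p$ on an impatient type $\type_H$ and mass $1-p$ on a patient type $\type_L$, with $0 < \type_H < \type_L < 1$. Under a single skip price, \Cref{prop:rev_single} reduces the designer's choice to two candidates: charge $(1-\type_H)\valueFunc$ and sell only to the impatient, or charge $(1-\type_L)\valueFunc$ and sell to both. I would pick $p$, $\type_H$, $\type_L$ so that these two candidates yield the same revenue $R^\star$, making either choice optimal among single prices. A clean choice is $p(1-\type_H) = 1-\type_L$, which equates the two candidate revenues and leaves a visible gap between the two willingness-to-pay numbers.

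Second, I would introduce a second skip option that becomes available only after the player has waited through a fraction $\alpha \in (0,1)$ of the task, at a reduced price $\skippay_2 < \skippay_1$. To make this well defined I would adopt the natural exponential-in-waiting-fraction model in which a type-$\type$ player who waits fraction $\alpha$ and then skips obtains value $\type^{\alpha}\valueFunc$. The patient type, being less hurt by partial waiting, then has a strictly larger willingness to pay for the partial skip than for the full skip, while the impatient type's loss from waiting $\alpha$ is large. I would then solve the standard two-type screening problem: set $\skippay_1$ so that the impatient type is indifferent between the full skip and imitating the partial skip, and set $\skippay_2$ just below $(1-\type_L^{1-\alpha})\valueFunc$ so the patient type is willing to take the partial skip. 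The resulting revenue is
\begin{equation*}
R_2 \;=\; p\,\skippay_1 + (1-p)\,\skippay_2.
\end{equation*}

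Third, I would verify $R_2 > R^\star$ by taking $\alpha$ small. At $\alpha = 0$ the menu collapses and $R_2 = R^\star$; differentiating (or simply Taylor-expanding $\type_L^{1-\alpha}$ and $\type_H^{1-\alpha}$ around $\alpha=0$) shows that the marginal increase in $\skippay_2$ as $\alpha$ rises strictly exceeds the marginal information rent conceded on the impatient type, because $\type_L > \type_H$ makes the patient type's extra surplus from partial waiting first-order while the incentive-compatibility slack needed for the impatient type is lower-order in $\alpha$. Choosing any sufficiently small $\alpha > 0$ therefore yields $R_2 > R^\star$, proving existence.

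The main obstacle will be pinning down the partial-skip model cleanly enough that incentive compatibility and the relevant derivatives are unambiguous; once the exponential-in-$\alpha$ discounting is fixed, the remaining work is a routine screening calculation. A secondary subtlety is ruling out that the optimal single price could lie strictly between $(1-\type_L)\valueFunc$ and $(1-\type_H)\valueFunc$, but this is immediate from the two-point support, since revenue as a function of price is piecewise linear with jumps only at the two candidate prices.
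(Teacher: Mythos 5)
Your proposal is correct in its essentials and shares the paper's core idea---a two-point type distribution screened by a menu consisting of an immediate full skip and a cheaper skip available only after partial waiting---but the verification is genuinely different: the paper simply exhibits an explicit two-block instance (types $0.25$ and $0.75$ with equal mass, prices $13/16$ and $1/4$) and computes menu revenue $\approx 0.53$ against an optimal single-skip revenue of $0.46875$, whereas you balance the two single-price candidates by choosing the impatient mass $q$ with $q(1-\gamma_H)=1-\gamma_L$ and then argue by a first-order perturbation in the waiting fraction $\alpha$. Your perturbation does go through once the partial-skip valuation is fixed: with the patient type's participation constraint and the impatient type's incentive constraint binding, the derivative of menu revenue at $\alpha=0$ is proportional to $\ln\gamma_L - \frac{1-\gamma_L}{1-\gamma_H}\ln\gamma_H$, which is strictly positive for any $0<\gamma_H<\gamma_L<1$ because $\ln\gamma/(1-\gamma)$ is strictly increasing on $(0,1)$; this holds whether or not the delayed payment is discounted in the buyer's utility, so your flagged modeling ambiguity is harmless. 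Two of your intuitive claims are backwards, though not load-bearing: every type's willingness to pay for the partial skip is strictly \emph{lower} than for the full skip (it is a degraded product), and as $\alpha$ grows it is $\skippay_1$, the price charged to the impatient type, that rises (their imitation option worsens) while $\skippay_2$ falls---the gain comes from relaxing the impatient type's incentive constraint, the classic quality-distortion logic, not from extracting more from the patient type. What your route buys is generality: it shows strict improvement for essentially any pair of types and any sufficiently small $\alpha$, rather than for one hand-picked instance; what the paper's route buys is brevity and the avoidance of any screening machinery, since explicit numbers settle an existence claim immediately.
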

\begin{proof}
    We prove this with a construction. 
    Imagine a task is segmented into two blocks, the marginal value for skipping just the last block is $(1-\type)\valueFunc$ and the value for skipping both blocks is $(1-\type^2)\valueFunc$. 
    Let the value for the task be $\valueFunc=1$, $\skippayith{0}$ be the price for skipping both blocks and $\skippayith{1}$ the price for skipping only the last block.
    We define a discrete type distribution with $\Prx{\type = 0.25}=0.5$ and $\Prx{\type = 0.75}=0.5$. 
    The pricing problem becomes a bit trickier with multiple blocks; we need to know how much \emph{projected} utility a player would receive for buying a skip at each block to avoid cannibalization. 
    If a player buys a skip immediately, they would receive utility $(1-\type^2)-\skippayith{0}$, and for waiting and then buying they would get $\type((1-\type) -\skippayith{1})$.
    We set prices $\skippayith{0} = \frac{13}{16}$ and $\skippayith{1} = \frac{1}{4}$. 
    The low type buys a skip in the second round, and the high type prefers buying in the first round: 
    \begin{center}
      \begin{tabular}{c|c c}
        \centering
         & $(1-\type^2)-\skippayith{0}$ &
         $\type((1-\type) -\skippayith{1})$ \\
         \hline
         $\type_{h} = 0.25$ & 0.125 & 0.125\\
         $\type_{\ell} = 0.75$ & -0.375 & 0 
      \end{tabular}  
    \end{center}
    Therefore, we achieve revenue $\REV = 0.5(0.8125)+0.5(0.25) \approx 0.53$. 
    On the other hand, single skip may choose to offer a price at either of the two types' marginal values in the first round. 
    This means they can choose either $\skippayith{0} = 0.9375$ or a price of $\skippayith{0} = 0.4375$ which achieve revenue $\REV = 0.46875$ and $\REV = 0.4375$ respectively.
\end{proof}

\section{Proofs from Section 5} \label{app:proof_sec_5}
\begin{lemma}\label{lma:truncMHR}
    If a distribution $\dist(x)$ satisfies MHR, then the distribution truncated to any interval $[a,b]$ also satisfies MHR.
\end{lemma}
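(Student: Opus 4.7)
The plan is to work directly with the hazard rate of the truncated distribution and establish monotonicity via a short factoring argument. Let $F_{[a,b]}$ denote the truncation of $F$ to $[a,b]$. For $x \in [a,b]$, the truncated CDF is $F_{[a,b]}(x) = (F(x)-F(a))/(F(b)-F(a))$ and the density is $f_{[a,b]}(x) = f(x)/(F(b)-F(a))$, so after the common factor cancels, the truncated hazard rate reduces to $h_{[a,b]}(x) = f(x)/(F(b)-F(x))$ and the truncated inverse hazard rate is $(F(b)-F(x))/f(x)$. Showing this inverse hazard rate is non-increasing on $[a,b]$ is equivalent to the MHR condition for $F_{[a,b]}$, so this is what I would aim to prove.

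The key step is the factoring
$$
\frac{f(x)}{F(b)-F(x)} = \frac{f(x)/(1-F(x))}{1 - (1-F(b))/(1-F(x))},
$$
which rewrites the truncated hazard rate as a quotient of two quantities whose monotonicity is transparent. The numerator is precisely the original hazard rate of $F$, which is non-decreasing in $x$ by the MHR hypothesis. For the denominator, $1-F(x)$ is non-increasing in $x$, so $(1-F(b))/(1-F(x))$ is non-decreasing in $x$, and hence $1 - (1-F(b))/(1-F(x))$ is non-increasing. Moreover, for $x < b$ the denominator is strictly positive since $F(x) < F(b)$. Thus the truncated hazard rate is the ratio of a non-decreasing positive quantity to a non-increasing positive quantity, and is therefore non-decreasing, which gives the desired MHR property for $F_{[a,b]}$.

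There is no substantive obstacle; the main care required is bookkeeping around degenerate cases. Specifically, I would confirm that the argument remains valid when $b$ is at the right endpoint of the support of $F$ (in which case $1-F(b)=0$ and the truncated hazard rate reduces to the original hazard rate), and that when $a$ is at the left endpoint the factoring is used as stated with no modification. Because the argument only uses monotonicity of the original hazard rate and of $1-F$, no smoothness or differentiability assumptions beyond those carried by MHR itself are required, and the proof works at the level of generality in which MHR is stated in the paper.
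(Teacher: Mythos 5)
Your proof is correct, but it takes a different route from the paper's. The paper also reduces to showing that the truncated inverse hazard rate $\bigl(\dist(b)-\dist(x)\bigr)/\dense(x)$ is non-increasing, but it does so by differentiating: it applies the quotient rule, rearranges the sign condition into $-\dense'(x)/\dense(x)^2 \leq 1/\bigl(\dist(b)-\dist(x)\bigr)$, and observes that this is implied by the analogous inequality with $1-\dist(x)$ in the denominator, which is exactly the derivative condition for MHR of the original distribution. Your argument instead factors the truncated hazard rate as
$$
\frac{\dense(x)}{\dist(b)-\dist(x)} \;=\; \frac{\dense(x)/\bigl(1-\dist(x)\bigr)}{1-\bigl(1-\dist(b)\bigr)/\bigl(1-\dist(x)\bigr)},
$$
a nonnegative non-decreasing numerator (the original hazard rate) over a positive non-increasing denominator, and concludes monotonicity without any calculus. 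The factoring is verified by clearing denominators, and your monotone-quotient step is sound because the numerator is nonnegative. What your approach buys is generality and robustness: it needs only monotonicity of the hazard rate and of $1-\dist$, with no differentiability of $\dense$, whereas the paper's derivative computation implicitly assumes $\dense'$ exists (and its displayed inequality chain is written somewhat loosely). What the paper's approach buys is that it stays entirely within the derivative formulation of MHR already used elsewhere in its appendix, so the two proofs share bookkeeping. Your handling of the endpoints (left truncation cancels entirely; $b$ at the right edge of the support reduces to the original hazard rate) is also fine.
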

\begin{proof}[Proof of \Cref{lma:truncMHR}]
The resulting distribution after truncation is $\frac{\dist(x)-\dist(a)}{\dist(b)-\dist(a)}$ and its corresponding density function is $\frac{\dense(x)}{\dist(b)-\dist(a)}$. We can write the inverse hazard rate as 
\begin{align*}
\left(1-\frac{\dist(x)-\dist(a)}{\dist(b)-\dist(a)}\right)\frac{\dist(b)-\dist(a)}{\dense(x)} =
\frac{\dist(b)-\dist(x)}{\dense(x)}
\end{align*}
We need to ensure the derivative is always non-positive,
\begin{align*}
\frac{\diff}{\diff x}\frac{\dist(b)-\dist(x)}{\dense(x)} = -\dense(x)^{2} - (\dist(b)-\dist(x))(\dense'(x))
\end{align*}
We must show
\begin{align*}
-\dense(x)^{2} - (\dist(b)-\dist(x)(\dense'(x)) \leq& 0 \\
-\frac{\dense'(x)}{\dense(x)^{2}} \leq& \frac{1}{\dist(b)-\dist(x)} \\
\leq& \frac{1}{1-\dist(x)}
\end{align*}
Its easy to see that this resulting condition is guaranteed by simply taking the derivative of the original hazard rate which satisfies MHR.
\end{proof}

\begin{lemma}\label{lma:scalarMHR}
    If a distribution $\dist(x)$ satisfies MHR, then a distribution whose input $x$ is scaled by some constant $c>0$ also satisfies MHR.
\end{lemma}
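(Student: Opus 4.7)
The plan is to write out the scaled distribution explicitly, compute its hazard rate, and reduce monotonicity to the MHR property of the original distribution via a change of variables.

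First I would define the scaled distribution precisely. Let $c > 0$ and let $G(x) = F(cx)$ denote the distribution of the random variable $Y = X/c$ where $X \sim F$ (the exact direction of scaling is immaterial, but I would fix one convention for the whole proof). Differentiating gives density $g(x) = c\, f(cx)$. The inverse hazard rate of $G$ is then
\begin{equation*}
  \frac{1 - G(x)}{g(x)} \;=\; \frac{1 - F(cx)}{c\, f(cx)} \;=\; \frac{1}{c}\cdot\frac{1 - F(cx)}{f(cx)}.
\end{equation*}

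Next I would argue monotonicity. Since $F$ satisfies MHR, the map $u \mapsto \frac{1-F(u)}{f(u)}$ is non-increasing in $u$. The composition $x \mapsto cx$ is strictly increasing for $c>0$, so $x \mapsto \frac{1-F(cx)}{f(cx)}$ is also non-increasing in $x$. Multiplying by the positive constant $\frac{1}{c}$ preserves monotonicity, so the inverse hazard rate of $G$ is non-increasing, which is exactly the MHR condition for $G$.

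There is essentially no obstacle here: the argument is a one-line change of variables, and the only subtlety is being consistent about whether ``scaling the input by $c$'' refers to $F(cx)$ or $F(x/c)$. Either convention works identically since both are strictly monotone compositions with a positive linear map, and the factor $c$ (or $1/c$) that appears from differentiation is a positive constant that does not affect monotonicity. I would pick one convention at the start of the proof and keep it throughout.
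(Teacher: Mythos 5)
Your proof is correct and follows essentially the same route as the paper's: MHR is preserved because composing with the increasing linear map $x \mapsto cx$ preserves monotonicity of the (inverse) hazard rate. If anything, your version is slightly more careful than the paper's terse argument, since you explicitly track the Jacobian factor $c$ in the density and note that this positive constant cannot affect monotonicity.
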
 
\begin{proof}[Proof of \Cref{lma:scalarMHR}]
    We have that directional monotonicity is preserved under composition i.e., for $g(x)$ monotonically increasing, we have that $f(x)$ monotonically increasing implies $f(g(x))$ is monotonically increasing.
    Let the hazard rate of the original distribution be $f(x)$ and the scaled input to that distribution be $g(x) = cx$.
    Clearly, $g(x)$ is monotonically increasing for all $c>0$ and by definition of MHR, $f(x)$ is monotonically increasing.
\end{proof}

\begin{lemma}\label{lma:MHR}
    If $\dist_{1-\type}$ has MHR then the marginal value distribution $\margDist$ has MHR.
\end{lemma}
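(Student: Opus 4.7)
The plan is to reduce the claim to \Cref{lma:scalarMHR} by exhibiting $\margDist$ as a simple rescaling of $\dist_{1-\type}$. In this insensitive setting, the value $\valueFunc$ is a positive constant (independent of $\skippay$), and the marginal value for a player with type $\type$ is $\valueFunc_m = (1-\type)\valueFunc$. So the first step is to compute
\begin{align*}
    \margDist(x) = \Prx[\type]{(1-\type)\valueFunc \leq x} = \Prx[\type]{1-\type \leq x/\valueFunc} = \dist_{1-\type}(x/\valueFunc).
\end{align*}

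Once this identity is in hand, the second step is immediate: $\margDist$ is obtained from $\dist_{1-\type}$ by scaling the input by the positive constant $c = 1/\valueFunc$. Applying \Cref{lma:scalarMHR} with this $c$, together with the hypothesis that $\dist_{1-\type}$ satisfies MHR, we conclude that $\margDist$ satisfies MHR.

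I do not anticipate a real obstacle here; the whole argument is a one-line rescaling reduction once one observes that in the insensitive regime the value is a constant multiplier between the impatience variable $1-\type$ and the marginal value $\valueFunc_m$. The only thing to be slightly careful about is that $\valueFunc > 0$, which is implicit in the setup of \Cref{sec:rev} (otherwise the model is degenerate and no skip is ever sold), so that $1/\valueFunc$ is a valid positive scaling constant for \Cref{lma:scalarMHR}.
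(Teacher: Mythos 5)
Your proof is correct and follows essentially the same route as the paper: both identify the marginal value distribution as $\dist_{1-\type}$ with its input rescaled by the positive constant value $\valueFunc$ and then invoke \Cref{lma:scalarMHR} (the paper does this at the hazard-rate level, $H_m(x)=H_{1-\type}(x/\valueFunc)$, while you do it at the CDF level, which is equivalent).
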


\begin{proof}
    The hazard rate of the $1-\type$ distribution, $H_{1-\type}$, can be derived directly from the hazard rate of the marginal value distribution, $H_{m}$.
    \begin{align*}
    H_m(x) &= \frac{\margDense\left(x\right)}{1-\prob[m]{(1-\type)\valueFunc \leq x}} = \frac{\dense_{1-\type}\left(\frac{x}{\valueFunc}\right)}{1-\prob[1-\type]{1-\type \leq \frac{x}{\valueFunc}}} \cdot \left|\frac{\diff}{\diff x} \frac{x}{v}\right| = \frac{\dense_{1-\type}\left(\frac{x}{\valueFunc}\right)}{1-\dist_{1-\type}\left(\frac{x}{\valueFunc}\right)} = H_{1-\type}
    \left(\frac{x}{\valueFunc}\right).
    \end{align*}
    As we can see, $H_{1-\type}(\nicefrac{x}{\valueFunc})$ is simply a scalar transformation on the input to $H_m(x)$ and from Proposition \ref{lma:scalarMHR} we have that MHR is preserved under such a transformation. 
\end{proof}

\section{Simulation Parameters}\label{appendix:simparams}
For all settings, the value of completing the tasks was set to $1$. 
We also choose $\skippay^*=0$. 
We use an initial population size of $N=10,000,000$.
For player type distributions we used exponential distributions with $\lambda=1,2,3$ as well as a uniform distribution on $[0,1]$. 
For retention distributions we used exponential with $\lambda=1,3,5$ and Pareto distributions with $\alpha = 3,5$. 
For game designer discount factors we used $\beta = 0.97,0.99,0.999$. 
For virality we used growth rates of $1\%$ and $5\%$. 
We first tested tested all possible combinations of these configurations both with and without independent retention draws, however we dropped the value of $\lambda=2$ for the player distribution when used in conjunction with the Pareto retention distribution to increase the space of other parameters we could explore. 
We chose these parameters both because they reflect the whale distributions from \Cref{sec:util} as well as to give us good coverage over the type of retention thresholds we might get.

For the scaling experiments we used the combinations of exponential retention distributions with the same $
\lambda$ parameters as above with the uniform player type distribution.
We used these with every combination of $\beta$ and also with each possible setting of virality. 

\subsubsection{Population Growth}
We also examine the impact of new players joining the game on MT pricing's performance with a ``viral'' model of population growth. 
In this model, there is a constant probability that each player will recruit a friend in each round. 
Each recruited player is drawn from the same type distribution i.i.d. as the original cohort of players. 

\subsubsection{Independently Drawn Retention}

We now test MT pricing when players sample their own retention threshold independently;
each player $i$ draws $\retention_{\task,i}$ i.i.d. from the retention distribution $\retDist$.
A player stays in the game if $\utilityFunc_{\max}(\valueFunc,\skippay_{\task},\type_{i})\geq \retention_{\task,i}$.
A fruitful elaboration to our model of retention would be somewhere between fully correlated and independent retention draws. 
This would model both players unique preferences but also the presence of popular competitors in the same niche genres.
See \Cref{fig:indepretHistogram} for a histogram of performance ratios.

\subsection{Lowering Payments}
We try three scaling factors on the Myerson price: $c\in\{\nicefrac{2}{3}, \nicefrac{1}{2}, \nicefrac{1}{3}\}$. 
We see that $c=\nicefrac{2}{3}$ has some revenue gains and is relatively safe; it does not decrease performance in any of the settings we tested.
Scaling the Myerson price by $\nicefrac{1}{2}$ had larger variance in performance, performing well in settings with the lowest threshold prices while in others it performed worse than no scaling.
Finally, scaling by $\nicefrac{1}{3}$ was worse across all the settings we tested.
Full histograms of the relative performance of scaled to unscaled MT pricing can be found in the appendix (\Cref{fig:scaledMThistogram}).

\section{Supplementary Images} \label{appendix:images}
\begin{figure}[ht]
    \centering
    \includegraphics[trim={0 0cm 1cm 0}, clip, width=0.3\textwidth]{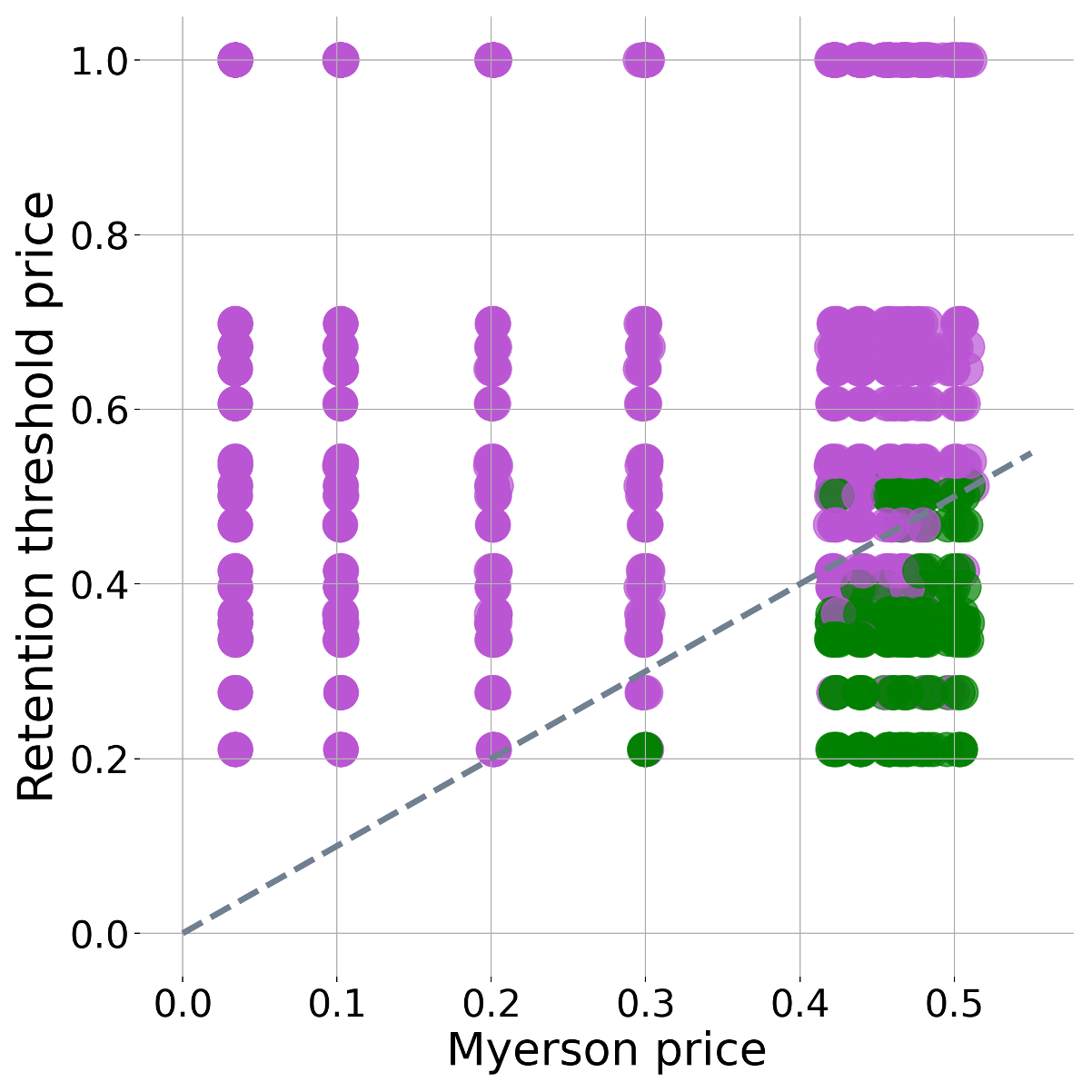}
    \caption{First round prices of Myerson optimal and retention threshold revenues across a variety of simulation parameters.}
    \label{fig:alt_pricing_revs}
\end{figure}

\begin{figure}[t]
    \centering
    \includegraphics[width=\textwidth]{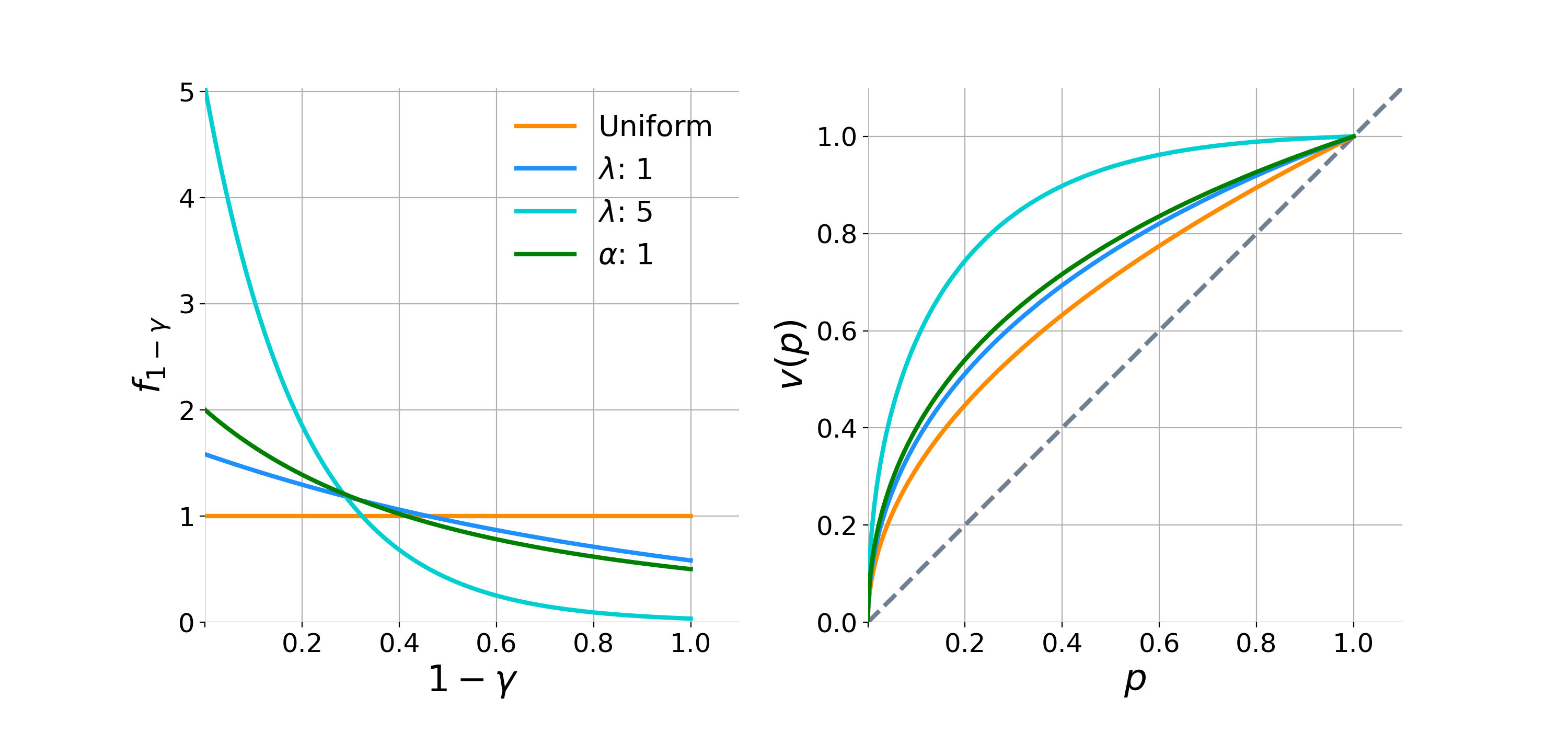}
  \caption{Example value functions generated from their corresponding distributions } \label{fig:example_value_dists}
\end{figure}

\begin{figure}[t]
    \centering
    \includegraphics[width=\textwidth]{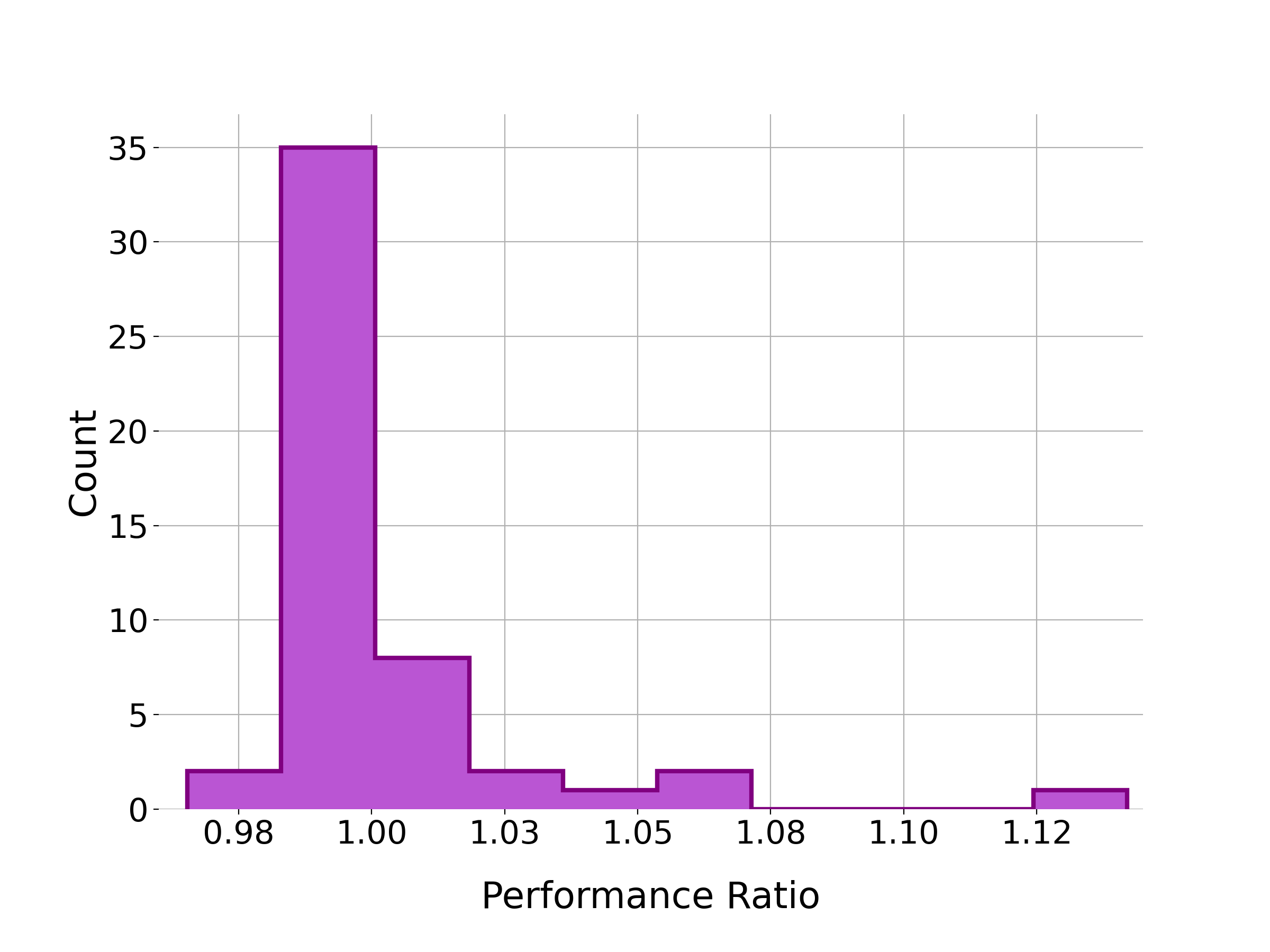}
  \caption{Histogram of the Performance Ratio between MT and Best Alternative (i.e. $\nicefrac{\REV(MT)}{\max(\REV(\text{Myerson Pricing}),\REV(\text{Retention Threshold}))}$)} \label{fig:MThistogram}
\end{figure}
\begin{figure}[t]
    \centering
    \includegraphics[width=\textwidth]{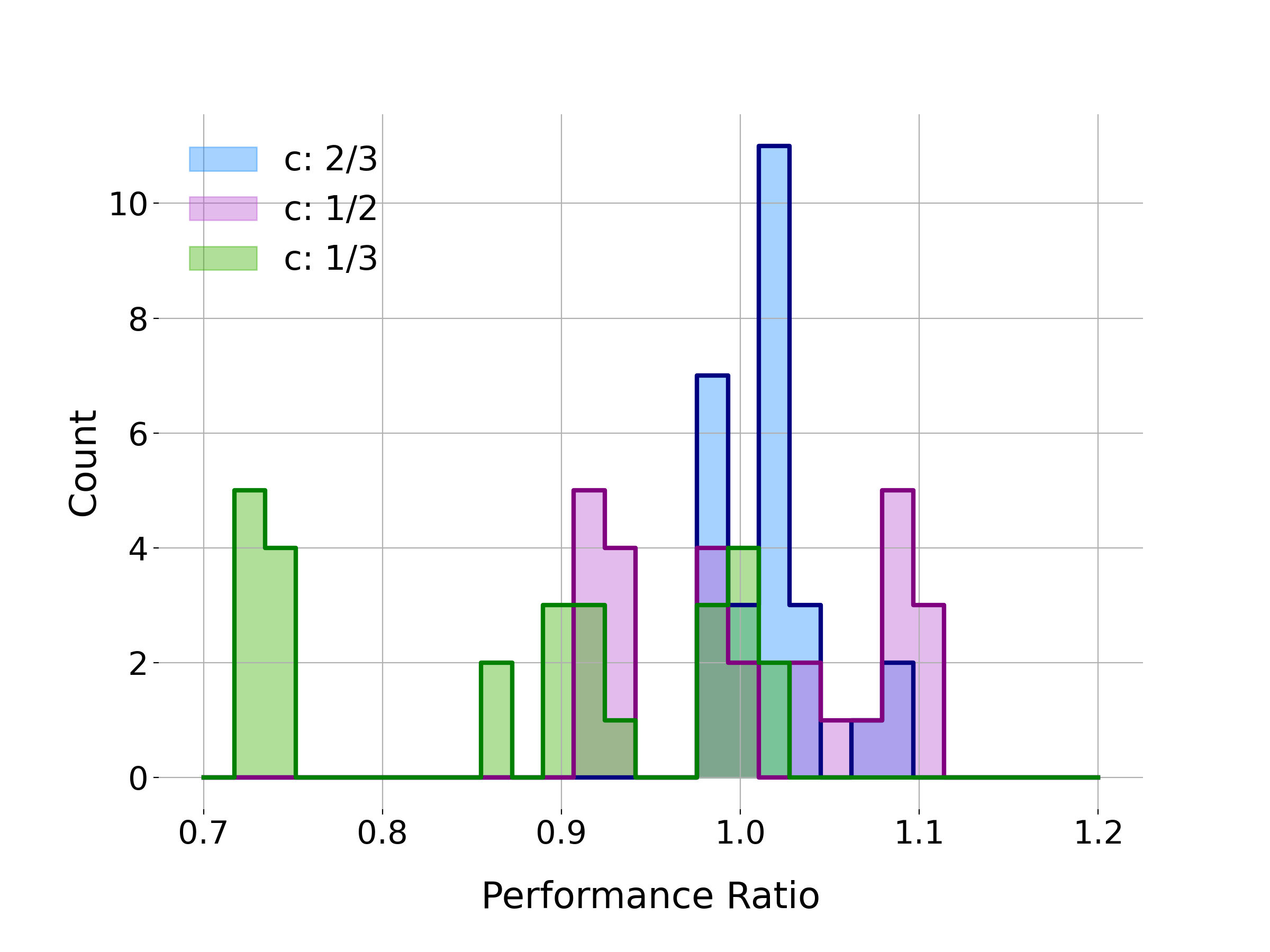}
  \caption{Histogram of the Performance Ratio between MT and Scaled MT (i.e. $\frac{\REV(Scaled MT)}{\REV(MT)}$) with Scaling Factors $c\in\{\nicefrac{2}{3},\nicefrac{1}{2},\nicefrac{1}{3}\}$} \label{fig:scaledMThistogram}
\end{figure}
\begin{figure}[t]
    \centering
    \includegraphics[width=\textwidth]{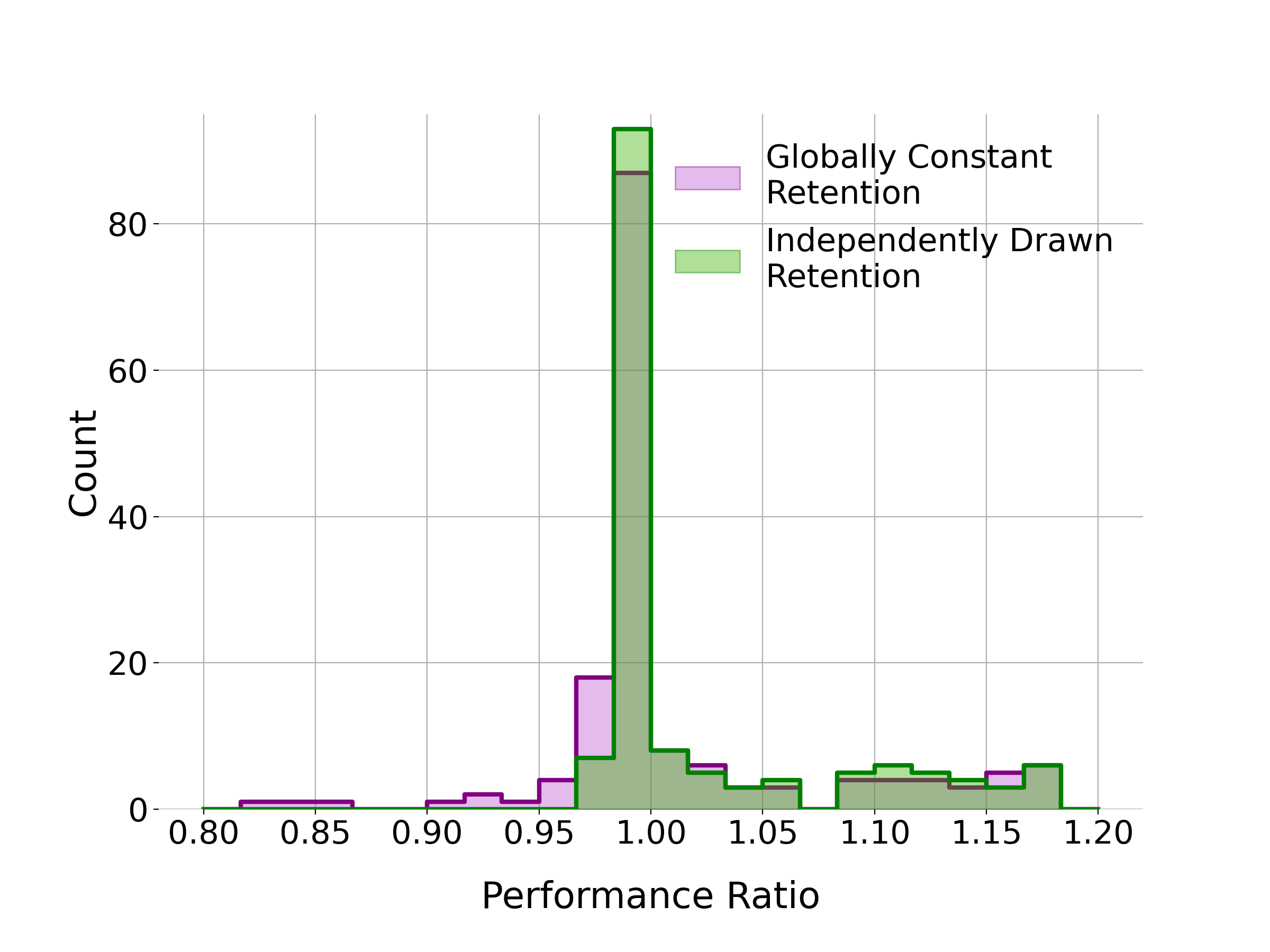}
  \caption{Histogram of the Performance Ratio between MT Pricing and Best Alternative (i.e. $\nicefrac{\REV(MT)}{\max(\REV(\text{Myerson Pricing}),\REV(\text{Retention Threshold}))}$) }
  \label{fig:indepretHistogram}
\end{figure}
\end{document}